\newtheorem{assumption}{Assumption}
\newtheorem{theorem}{Theorem}
\newtheorem{proposition}{Proposition}
\newtheorem{definition}{Definition}
\newtheorem{corollary}{Corollary}
\newtheorem{lemma}{Lemma}
\begin{document}
\title{Decentralised Learning in Systems with Many, Many Strategic Agents }
\author{David Mguni$^1$, \hspace{1.5 mm} Joel Jennings$^1$ \and Enrique Munoz de Cote$^{1,2}$\\
$^1$PROWLER.io, Cambridge, UK\\
$^2$Department of Computer Science, INAOE, Mexico
}
\maketitle
\begin{abstract}
Although multi-agent reinforcement learning can tackle systems of strategically interacting entities, it currently fails in scalability and lacks rigorous convergence guarantees. Crucially, learning in multi-agent systems can become intractable due to the explosion in the size of the state-action space as the number of agents increases. In this paper, we propose a method for computing closed-loop optimal policies in multi-agent systems that scales \emph{independently} of the number of agents. This allows us to show, for the first time, successful convergence to optimal behaviour in systems with an \emph{unbounded} number of interacting adaptive learners. Studying the asymptotic regime of $N-$player stochastic games, we devise a learning protocol that is guaranteed to converge to equilibrium policies even when the number of agents is extremely large. Our method is \emph{model-free} and completely decentralised so that each agent need only observe its local state information and its realised rewards. We validate these theoretical results by showing convergence to Nash-equilibrium policies in applications from economics and control theory with thousands of strategically interacting agents.
\end{abstract}
\section{Introduction}
Multi-agent reinforcement learning (MARL) provides the potential to systematically analyse environments with strategically interacting agents. Despite the fundamental relevance of multi-agent systems (MASs) with appreciably large populations, learning stable, best-response policies in MASs with more than a few agents remains a significant challenge due to growth in complexity as the number of agents increases \cite{shoham}. 
Consequently, the task of understanding agent behaviour in many systems of interest has been left unaddressed. \newline
\indent This paper seeks to address the problem of learning stable, best-response policies within non-cooperative\footnote{In game theory, the term non-cooperative implies that each agent seeks to pursue its own objectives and agreements between agents over their actions cannot be enforced.} MASs  when the size of the population is large, therefore expanding the range of applications of multi-agent technology.\newline
\indent In a non-cooperative MAS, selfish agents compete to obtain a sequence of rewards within an unknown environment. A stochastic (dynamic) game (SG) is a mathematical framework that analyses the behaviour of strategically interacting entities in non-cooperative settings. Stochastic games enable stable policy outcomes in which agents respond optimally to one another (know as \textit{equilibria}), to be fully described. In SGs, it is assumed that agents have either fixed knowledge of their environment or can acquire knowledge of any missing data by simply observing other agents with which agents can compute best-response actions. However, in many multi-agent systems, agents do not have full information of the environment from the outset and direct computation of optimal behaviour is often prohibitively complex. \newline
\indent Naturally, integrating stochastic game theory with reinforcement learning (RL) - a framework that enables agents to learn optimal behaviour within an unknown environment through direct interaction and exploration - suggests the potential to learn stable policies in multi-agent systems. \newline 
 Although this approach has led to fruitful analysis in multi-agent systems  with few interacting agents \cite{deep2017}, current methods of computing multi-agent equilibria using RL  (e.g. Nash Q-Learning \cite{Well1}, Friend-or-Foe Q-learning \cite{Litt1}, minimax-Q \cite{minimaxQ}) have computational complexity that increases exponentially with the number of agents \cite{busoniu2008comprehensive,tuyls}. This renders the task of using RL to learn equilibrium policies intractable for many systems of interest. \newline  
\indent In this paper, we introduce an approach that enables equilibrium policies of multi-agent systems to be computed even when the size of the population is extremely large. Unlike current multi-agent learning procedures, our method scales independently of the number of interacting agents. In contrast to approaches that compute equilibria in large population games \cite{cardaliaguet}, our method is a model-free, fully decentralised learning procedure that only requires agents to observe local state information and their realised rewards.\newline
\indent Our main result demonstrates that the equilibria of $N-$ player SGs can be computed by solving an optimal control problem (OCP) using a model-free learning procedure under very mild assumptions. To do this, we prove a series of theoretical results: first, we establish a novel link between reinforcement learning in MASs and a class of games known as \emph{discrete-time  mean field games} - $N-$player SGs in an asymptotic regime as $N\to\infty$. Secondly, we demonstrate that in the asymptotic regime, the resultant game belongs to a class of games known as \emph{potential games}. These are reducible to a single objective OCP leading to a vast reduction in the problem complexity. Our last result proves that the equilibria of SGs in an asymptotic regime are in fact approximate equilibria of the $N-$ player SG with an approximation error that vanishes as $N$ increases. Finally, we validate our theoretical results by application to a series of problems within economics and optimal control theory. Our approach is based on a variant of the fictitious play - a belief-based learning rule for static games introduced by Brown \cite{brown1951} and generalised to adaptive play in \cite{Leslie2006} for games with finite action sets.  \newline 
\indent After formulating the problem as an SG, the paper is organised as follows: first, we provide a formal description of a discrete-time mean field game and show that the game is a potential game (Theorem 1). We show that given the potentiality property, the problem is reducible to a single objective OCP. We then introduce the learning protocol and show that under this protocol, the game has strong convergence guarantees to equilibrium policies (Theorem 2). We lastly show that the equilibria generated for the mean field game are approximate equilibria of the $N-$player SG with an approximation error that vanishes as $N$ increases (Theorem 3). Taking benchmark examples from economics and the multi-agent literature, we demonstrate our method within a number of examples with large populations of interacting agents.  
\section{Background}
We now give the background for SG theory by introducing the $N-$player SG formalism. In order to handle multi-agent systems  with large populations, we consider the $N-$player SG formalism when the number of agents tends to infinity - we therefore introduce the notion of mean field games - SGs studied in the asymptotic regime in the number of agents.  
\subsection{Problem formulation: $N-$Player Stochastic Games}
 The canonical framework to describe multi-agent systems  in which agents behave rationally and non-cooperatively is a stochastic (dynamic) game (SG). Let us therefore introduce a formal description of an SG:\newline
\indent Let $\mathcal{N}\triangleq\{1,\ldots,N\}$ denote the set of agents where $N\in\mathbb{N}$. At each time step $k\in 1,2,\ldots T\in\mathbb{N}$,\footnote{The formalism can be straightforwardly extended to infinite horizon cases by appropriate adjustment of the reward function.} the state of agent $i\in\mathcal{N}$ is $x_k^i\in\mathcal{S}^i$ where $\mathcal{S}^i\subset\mathbb{R}^d$ is a $d-$dimensional state space. The state of the system at time $k\leq T$ is given by $x_k\triangleq(x^i_k)_{i\in\mathcal{N}}$ where $x_k\in\mathcal{S}\triangleq\times_{j\in \mathcal{N}}\mathcal{S}^j$, which is the Cartesian product of space of states for each agent. Let $\Pi^i$ be a non-empty compact set of stochastic closed-loop policies\footnote{\emph{Closed-loop} policies are maps from states to actions and are likely to be the only policies that produce optimal behaviour in stochastic systems. \emph{Open-loop} policies simply specify pre-computed (state-independent) sequences of actions. } for agent $i$ where $\pi^i:\mathcal{S}\to\Delta\mathbb{A}^i$ where $\mathbb{A}^i\subseteq\mathbb{R}^d$ is a compact, non-empty action set for each agent $i\in\mathcal{N}$. Denote by $\mathbf{\Pi}$ the set of policies for all agents i.e. $\mathbf{\Pi}\triangleq\times_{j\in \mathcal{N}}\Pi^j$. We denote by $\Pi^{-i}\triangleq\times_{j\in \mathcal{N}\backslash\{i\}}\Pi^j$, the Cartesian product of the policy sets for all agents except agent $i\in\mathcal{N}$. \newline 
 At each time step, each agent $i\in\mathcal{N}$ exercises its policy $\pi^i$, the agent's state then transitions according to the following\footnote{With this specification, agents do not influence each others' transition dynamics directly - this is a natural depiction of various systems e.g. a portfolio manager's modification to their own market position. This does not limit generality since prohibited state transitions (e.g. collisions) can be disallowed with a reward function that heavily penalises such joint action behaviour.}:
\begin{equation}
x^i_{k+1}=f(x^i_k,a_k^i,\zeta_k), \hspace{2 mm} a_k^i \sim \pi^i,\hspace{2 mm} k = 0,1,\ldots,T \label{statetransitionnonMFG}
\end{equation}
where $\{\zeta_k\}_{0\leq k \leq T}$ is a collection of i.i.d. random variables that introduce randomness in the agent's state transition.\newline
 Each agent $i$ has a cumulative reward function $J^i:\mathcal{S}\times\Pi^i\times\Pi^{-i}\to\mathbb{R}$ that it seeks to maximise given by the following: 
\begin{equation}
J^i[x_t,\pi^i, \pi^{-i}] = \mathbb{E}_{x\sim f}\Bigg[\sum_{k=t}^TL(x_k^i,x_k^{-i},a_k^i)\Big|a_k^i\sim \pi^i\Bigg],\label{finitegamepayoff}
\end{equation}
where $x^i_k\in\mathcal{S}^i$ and $x^{-i}_k\in\mathcal{S}^{-i}$ are the state for agent $i$ and the collection of states for agents $j\in \mathcal{N}\backslash\{i\}$ at time $k\leq T$ respectively and $\pi^i_k$ is the policy for agent $i$. The function $L$ is the \emph{instantaneous reward function} which measures the reward received by the agent at each time step. We refer to the system of equations (\ref{statetransitionnonMFG}) - (\ref{finitegamepayoff}) as game (A). \newline \indent We now formalise the notion of optimality within an SG, in particular, a \emph{Markov-Nash-equilibrium} for the game (A) is the solution concept when every agent plays their best response to the policies of other agents. Formally, we define the notion of equilibrium for this game by the following: 
\begin{definition}The strategy profile $\boldsymbol{\pi}=(\pi^i,\pi^{-i})\in\mathbf{\Pi}$ is said to be a \textit{Markov-Nash equilibrium (M-NE) strategy} if, for any policy for agent $i$, $\pi'^i\in\Pi^i$ and $\forall y\in \mathcal{S}$, we have:
\begin{equation}
J^i[y,\pi^i,\pi^{-i}]\geq J^i[y,\pi'^i,\pi^{-i}], \hspace{2 mm} \forall i \in \mathcal{N}. \label{nasheq0}
\end{equation}
\end{definition} 
 The M-NE condition  identifies strategic configurations in which no agent can improve their rewards by a unilateral deviation from their current strategy.\newline \indent We will later consider approximate solutions to the game (A). In order to formalise the notion of an approximate solution, we introduce $\epsilon-$\emph{Markov-Nash equilibria} ($\epsilon-$M-NE) which extends the concept of M-NE to strategy profiles in which the incentive to deviate never exceeds some fixed constant.  The notion of an $\epsilon-$M-NE can be described using an analogous condition to (\ref{nasheq0}). Formally, the strategy profile $(\pi^i,\pi^{-i})\in\mathbf{\Pi}$ is an $\epsilon-$Markov-Nash equilibrium strategy profile if for a given $\epsilon>0$ and for any individual strategy for agent $i$, $\pi'^i\in\Pi^i$ we have that $\forall y\in \mathcal{S}$:
\begin{equation}
J^i[y,\pi^i,\pi^{-i}]\geq J^i[y,\pi'^i,\pi^{-i}]-\epsilon,\hspace{2 mm} \forall i \in \mathcal{N}.\label{nasheqE}
\end{equation}
 \indent Although in principle, methods within RL such as TD learning can be used to compute the equilibrium policies \cite{sutton}, learning takes place in the product space of the state space and the set of actions across agents, so the problem complexity grows exponentially with the number of agents. \newline 
\indent A second issue facing RL within MASs is the appearance of non-stationarity produced by other adaptive agents. During the learning phase, agents update their policies and thus the way they influence the system. In a non-cooperative MAS with even just a few agents learning independently, the presence of other adaptive agents induces the appearance of a non-stationary environment from the perspective of an individual agent. This in turn may severely impair the agent's own reinforcement learning process and lead to complex and non-convergent dynamics \cite{tuyls}. \newline
\indent With these concerns, we present an alternative approach which involves studying the game (A) in an asymptotic regime as $N\to\infty$. This results in a mean field game - an SG with an infinite population which, as we shall show is both reducible to a single OCP and has M-NE that are $\epsilon-$M-NE for $N-$player SGs where $\epsilon \sim \mathcal{O}(\frac{1}{\sqrt{N}}$).\newline
\indent In order to show the discrete-time mean field game is reducible to single OCP, we demonstrate that they belong to a class of games known as dynamic potential games (PGs). Before considering the mean-field game case, let us formally define a PG in the context of an $N-$player SG:
\begin{definition}\label{potentialdef}
An SG is called a (dynamic) potential game (PG) if for each agent $i\in\mathcal{N}$ and for any given strategy profile $\boldsymbol{\pi}\in\mathbf{\Pi}$ there exists a \textit{potential function} $\Omega:\cdot\times\Pi^i\times\Pi^{-i}\to\mathbb{R}$ that satisfies the following condition $\forall \pi'^i\in\Pi^i:$
\begin{equation}
    J^i[\cdot,(\pi^i,\pi^{-i})]-J^i[\cdot,(\pi'^i,\pi^{-i})]=\Omega[\cdot,(\pi^i,\pi^{-i})]-\Omega[\cdot,(\pi'^i,\pi^{-i})].\label{potentialcondition}
\end{equation}
\end{definition}
\indent A PG has the property that any agent's change in reward produced by a unilateral deviation in their strategy is exactly expressible through a single global function. In PGs - the (Nash) equilibria can be found by solving an OCP \cite{MONDERER1996124}. This is a striking result since obtaining the solution to an OCP is, in general, an easier task than standard methods to obtain  equilibria which rely on fixed point arguments.
\subsection{Mean Field Games}
In this section, we introduce a mean field game (MFG) which is a central framework to our approach. Mean field game theory is a mathematical formalism that handles large-population systems of non-cooperative rational agents. MFGs are formulated as SGs in the form (A) analysed at the asymptotic limit as the number of agents tends to infinity. This formulation enables the collective behaviour of agents to be jointly represented by a probability distribution over the state space \cite{lasry}. \newline 
\indent The MFG formulation results in a description of the agents' optimal behaviour that is compactly characterised by a coupled system of partial differential equations. However, obtaining closed analytic solutions (or even approximations by tractable numerical methods) for the system of equations is often unachievable but for specific cases.\newline
\indent This work offers a solution to this problem; in particular we introduce a learning procedure by which the equilibria of MFG can be learned by adaptive agents. Beginning with the case in which the number of agents is finite, we introduce the following empirical measure which describes the $N$ agents' joint state at time $k$:
\begin{align}
m_{x_k}\triangleq\frac{1}{N}\sum_{i=1}^N\delta_{x^i_k}, \label{empirical measure}
\end{align}
where $x^i_k\in\mathcal{S}$ and $\delta_{x}$ is the Dirac-delta distribution evaluated at the point $x\in\mathcal{S}$.\newline
\indent We now study a game with an infinite number of agents by considering the formalism the $N-$player game (A) in the asymptotic regime as $N\to\infty$ which allows us to treat the ensemble in (\ref{empirical measure}) as being continuously distributed over $\mathcal{S}$. We call this limiting behaviour the \emph{mean field limit} which is an application of the law of large numbers for first order (strategic) interactions in the game (A). We observe that by taking the limit as $N\to\infty$ and using de Finetti's theorem\footnote{Given a sequence of indexed random variables $x_1,x_2,\ldots$ which are invariant under permutations of the index, De Finetti's theorem \cite{FIN1} ensures the existence of the random variable $m_{x_k}$ in (\ref{empirical measure}) in the limit as $N\to\infty$.}, we can replace the empirical measure (\ref{empirical measure}) with a probability distribution $m\in\mathbb{P}(\mathcal{H})$ where $\mathbb{P}(\mathcal{H})$ is a space of probability measures. The distribution $m$ describes the joint locations of all agents in terms of a distribution. \newline 
\indent With this structure, instead of agents responding to the actions of other agents individually, each agent now performs its actions in response to the mass which jointly represents the collection of states for all agents. \newline 
\indent As is standard within the MFG framework \cite{lasry}, we assume that the MFG satisfies the \textit{indistinguishably property} - that is the game is invariant under permutation of the agents' indices.\newline
\indent The following concept will allow us to restrict our attention to games with a single M-NE:
\begin{definition}
The function $v:\mathbb{P}(\mathcal{H})\times\cdot\to\mathbb{R}$ is said to be \emph{strictly monotone} in the $L^2-$ norm given $m_1,m_2\in\mathbb{P}(\mathcal{H})$ if the following is satisfied:
\begin{equation*}
\int_\mathcal{S}(v(m_1,\cdot)-v(m_2,\cdot))(m_1-m_2)dx\geq 0 \implies m_1\equiv m_2.\label{monotone}
\end{equation*}
\end{definition}
 The strict monotonicity condition means that in any given state, agents prefer a lower concentration of neighbouring agents. This property is a natural feature within many practical applications in which the presence of others reduces the available rewards for a given agent e.g. spectrum sharing \cite{ahmad2010spectrum}. \newline 
 We make use of the following result which is proved in \cite{lasry}\footnote{In \cite{lasry} the result is proven for mean field games with continuous action and state spaces in continuous time. The corresponding results for discrete games (discrete state space, time and action set) have also been proven (see theorem 2, pg 6 in \cite{GOMES2010}).}:
\begin{proposition}[Lasry \& Lions, 2007]
If the instantaneous reward function of a MFG is strictly monotone in $m\in\mathbb{P}(\mathcal{H})$, then there exists a unique M-NE for the MFG.
\end{proposition}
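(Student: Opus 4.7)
The plan is to split the proposition into an existence part and a uniqueness part, since they require quite different tools: strict monotonicity enters only in the uniqueness half, while existence follows from a standard fixed-point argument on the mean-field consistency map.

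For existence I would introduce $\Phi:\mathbb{P}(\mathcal{H})\rightrightarrows\mathbb{P}(\mathcal{H})$ that sends a frozen candidate mass $m$ to the state-occupation law of a representative agent who plays a best response against $m$; the best response is obtained by solving the single-agent OCP with return $J[\cdot,m]$. Compactness of $\mathcal{S}$ and of each $\mathbb{A}^i$, continuity of $f$ and $L$, and a finite-horizon recursion via Berge's maximum theorem together show that $\Phi$ is upper-hemicontinuous with nonempty convex compact values on the convex compact set $\mathbb{P}(\mathcal{H})$ in the weak topology. Kakutani-Glicksberg-Fan then delivers a fixed point $m^\star$, and $(\pi^\star(m^\star),m^\star)$ is an M-NE by construction.

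For uniqueness I would argue by contradiction. Suppose $(\pi_1,m_1)$ and $(\pi_2,m_2)$ are two distinct M-NE with $m_1\neq m_2$. Because the dynamics (\ref{statetransitionnonMFG}) do not depend on the mass, the occupation law of a representative agent who plays $\pi_i$ is $m_i$ irrespective of the frozen mass, so $J[\pi,m]=\int_\mathcal{S}L(x,m)\,d\mu^\pi(x)$ with $\mu^{\pi_i}=m_i$. Substituting this decomposition into the Nash best-response inequalities $J[\pi_1,m_1]\geq J[\pi_2,m_1]$ and $J[\pi_2,m_2]\geq J[\pi_1,m_2]$ and adding yields
\begin{equation*}
\int_\mathcal{S}\bigl(L(x,m_1)-L(x,m_2)\bigr)(m_1-m_2)\,dx\geq 0,
\end{equation*}
which by the strict monotonicity hypothesis forces $m_1\equiv m_2$, contradicting $m_1\neq m_2$.

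The delicate step is the existence half rather than the uniqueness half; the latter reduces to a one-line symmetric rearrangement once the decomposition $J[\pi,m]=\int L(x,m)\,d\mu^\pi$ is in hand. The main obstacle for existence is verifying upper-hemicontinuity of the best-response correspondence $m\mapsto\pi^\star(m)$ under the weak topology on $\mathbb{P}(\mathcal{H})$, which requires continuity of $L(x,m)$ in $m$ in a sufficiently strong sense together with a Feller-type property of $f$. Both are standing assumptions in the discrete-time MFG framework of \cite{GOMES2010}, and once they are in force the existence argument collapses to routine applications of Berge and Kakutani.
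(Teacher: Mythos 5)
The paper does not prove this proposition at all: it is imported verbatim from Lasry and Lions (2007), with a footnote deferring the discrete-time version to Gomes, Mohr and Souza (2010), so there is no in-paper argument to compare yours against. Your proposal reconstructs exactly the canonical proof from those sources --- Kakutani--Glicksberg--Fan on the best-response/consistency correspondence for existence, and the symmetric rearrangement of the two Nash inequalities plus monotonicity for uniqueness --- and the uniqueness half is correct as written, including the key observation that because the dynamics $f$ in (\ref{statetransitionnonMFG}) do not depend on the mass, the occupation law of $\pi_i$ is $m_i$ regardless of the frozen measure. Two points you should make explicit. First, $L$ in the game (B) depends on the action as well, and the reward is a sum over the time-indexed flow $m_{x_k}$; the decomposition you use only yields the clean cross-difference after the control-dependent part of $L$ cancels when the two inequalities are added, which requires separability of $L$ in $u$ (the paper's Assumption 4) and an application of the monotonicity inequality at each time step. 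Second, the argument delivers uniqueness of the equilibrium distribution $m$ (and hence of the value), not of the policy itself; uniqueness of $\pi$ needs the strict convexity of $L$ in $u$ on top of this. Neither point changes the structure of your proof, but both are where the hypotheses of the proposition are actually consumed.
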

\section{MAS with Infinite Agents}
 To develop a learning procedure that scales with the number of agents, we now consider the game (A) in the mean-field limit. We shall demonstrate that this procedure allows us to reduce the game (A) to a strategic interaction between an agent and an entity that represents the collection of other agents. This plays a key role in reducing the problem complexity and collapsing it to a single OCP. \newline
\indent We shall firstly define the $N-$player stochastic game (A) in the asymptotic regime. We note that in light of the \emph{indistingushability criterion}, we can drop the agent indices:
\begin{definition}[Discrete-Time Mean Field Game]\label{B} We call the following system a \textit{discrete-time mean field game} if its dynamics can be represented by the following system:
\begin{align}
x_{k+1}&=f(x_k,a_k,\zeta_k) \label{statetransitionMFG}\\
m_{x_{k+1}}&=g(m_{x_k},\mathbf{a}_{k})\label{mtransitionMFG},
\end{align}
\end{definition}
\noindent where $a_k\sim\pi$ for some $\pi\in\Pi$ and $\mathbf{a}_k=(a_k)_{i\in\mathcal{N}}$, $k\in 0,1,\ldots, T$, for a given time horizon $T<\infty$, $m_{x_k}\in\mathbb{P}(\mathcal{H})$ is the agent density corresponding to the asymptotic distribution (\ref{empirical measure}) evaluated at $x_k\in\mathcal{S}$, $\pi_k$  is a policy exercised at each time step $k\leq T$ and $\zeta_k$ is an i.i.d. variable which captures the system stochasticity. We refer to the system (\ref{statetransitionMFG}) - (\ref{mtransitionMFG}) as game (B).\newline
Given $m_{x_t}$ and $\pi\in\Pi$ with the agent index removed, we consider games where each agent has the following reward function:
\begin{equation}
J[x_t,\pi,m_{x_t}]=\mathbb{E}\Big[\sum_{k=t}^TL(x_k,m_{x_k},a_k)\Big|a_k\sim \pi\Big],    \label{payofffunctionmfg}
\end{equation}
where $x_t\in\mathcal{S}$ is some initial state and $a_k$ is the action taken by the agent. \newline 
\indent We are now in a position to describe a MFG system at equilibrium i.e. when each agent plays a best-response to the actions of other agents.\newline 
 \indent Given some initial state $x_t\in\mathcal{S}$, the joint solution $(\tilde{\pi},\tilde{m})$ to the game (B) is described by the following triplet of equations which describes the M-NE: 
\begin{align}
\tilde{\pi}&\in \arg\hspace{-0.65 mm}\max_{\hspace{-1.65 mm}\pi\in\Pi}J[x_t,\pi,\tilde{m}_{x_t}],\label{controlMFG*}\\
\tilde{x}_{k+1}&=f(\tilde{x}_k,\tilde{a}_k,\zeta_k), \hspace{1.8 mm}
\tilde{{a}}_{k}\sim\tilde{{\pi}}\in\Pi\\\tilde{m}_{\tilde{x}_{k+1}}&=g(\tilde{m}_{\tilde{x}_k},\tilde{\mathbf{a}}_{k}),\hspace{1.8 mm} \tilde{\mathbf{a}}_{k}=(\tilde{a}_k)_{i\in\mathcal{N}}\label{mtransitionMFG*},
\end{align}
where as before, $\{\zeta_k\}_{0\leq k \leq T}$ is a collection of i.i.d random variables and, $m_{\tilde{x}}$ is the agent density induced when the policy $\tilde{\pi}$ is exercised by each agent. \newline
\indent An important feature of the system (\ref{controlMFG*}) - (\ref{mtransitionMFG*}) is that the agent's problem is reduced to a strategic interaction between itself and a single entity $\tilde{m}_x$. This property serves a crucial role in overcoming the appearance of non-stationarity in an environment with many adaptive learners since the influence of all other agents on the system is now fully captured a single entity $\tilde{m}_x$ which, influences the system dynamics in a way that an adaptive agent can learn its optimal policy.  \newline
\indent Existing methods of computing equilibria in MFGs however rely on the agents having full knowledge of the environment to compute best responses and involve solving to non-linear partial differential equations \cite{cardaliaguet,master2} which, in a number of cases leads to intractability of the framework. MFGs are closely related to anonymous games - games in which the agents' rewards do not depend on the identity of the agents they interact with (but do depend on the interacting agents' strategies). Multi-agent learning has been studied for anonymous games \cite{kash} however, this approach requires agents to fix their policies over \emph{stages} and to explicitly compute approximate best-responses. In the following sections of the paper, we develop a technique which enables equilibrium policies of MFGs to be computed by adaptive learners in an unknown environment without solving partial differential equations. 
\section{Theoretical Contribution}
\subsection{Mean Field Games are Potential Games}
$\indent$ We now demonstrate that the discrete-time MFG problem (B) is reducible to an objective maximisation problem. By proving that the discrete-time mean field game is a PG, the following theorem enables us to reduce the problem to a single OCP: 
\begin{theorem}\label{DTMFGsarepotential}
\textit{ The discrete-time mean field game (B) is a PG.}
\end{theorem}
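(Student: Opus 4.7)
\emph{Proof proposal.} The plan is to exploit the representative-agent structure of the mean field limit: since the population is continuous and indistinguishable, a unilateral deviation by a single agent does not perturb the mean field $m\in\mathbb{P}(\mathcal{H})$, so the dependence of each agent's reward on ``other agents'' is entirely captured by an object that is exogenous to that agent. Granted this, the natural candidate potential is the representative agent's own reward functional, and I would verify Definition~\ref{potentialdef} by direct substitution.

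Concretely, given a common policy $\pi\in\Pi$ played by the population and the induced mean-field trajectory $\{m_{x_k}\}_{k=t}^T$ generated by (\ref{mtransitionMFG}), define the candidate potential
\[
\Omega[x_t,\pi,m_{x_t}]\;\triangleq\;J[x_t,\pi,m_{x_t}].
\]
For any unilateral deviation $\pi'\in\Pi$ by a single agent from the common policy $\pi$, the representative-agent property implies that the mean-field path $\{m_{x_k}\}$ is unchanged: the deviating agent contributes mass of order $1/N\to 0$ to the empirical measure (\ref{empirical measure}), so in the asymptotic regime $g$ in (\ref{mtransitionMFG}) responds only to the common policy. Since the two rewards $J[\cdot,\pi,m_{x_t}]$ and $J[\cdot,\pi',m_{x_t}]$ are then evaluated against the \emph{same} mean-field argument, the identity in (\ref{potentialcondition}) reduces to the tautology
\[
J[\cdot,\pi,m_{x_t}]-J[\cdot,\pi',m_{x_t}]=\Omega[\cdot,\pi,m_{x_t}]-\Omega[\cdot,\pi',m_{x_t}],
\]
which establishes the potential property with $\Omega\equiv J$. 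Note that this collapse of the deviation identity is precisely what fails in the $N$-player setting of Definition~\ref{potentialdef}, where a unilateral move perturbs the empirical measure by $\Theta(1/N)$ and forces a nontrivial choice of $\Omega$.

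The main technical obstacle is making rigorous the assertion that $m_{x_k}$ is invariant under a single-agent deviation in the asymptotic regime, since a careless reading could conflate ``each agent is atomless'' with the dynamic statement that the entire trajectory $\{m_{x_k}\}$ is unchanged. I would handle this by working at finite $N$ with the empirical measure $m_{x_k}^N=\tfrac{1}{N}\sum_{i=1}^N\delta_{x_k^i}$, replacing one agent's policy by $\pi'$, and bounding the resulting perturbation in a suitable metric on $\mathbb{P}(\mathcal{H})$ (total variation or $1$-Wasserstein) by $\mathcal{O}(1/N)$ uniformly in $k\le T$; a finite-horizon induction using continuity of $g$ in its first argument propagates the one-step bound, and continuity of $L$ in $m$ then yields $|J[\cdot,\pi',m^N_\pi]-J[\cdot,\pi',m^N_{\pi,\pi'}]|\to 0$. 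Combined with de Finetti and the indistinguishability criterion already invoked to pass from (A) to (B), this justifies evaluating $J$ with the common $m$ on both sides of the deviation identity, after which the theorem is immediate.
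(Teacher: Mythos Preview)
Your proposal is correct and follows essentially the same route as the paper: both arguments reduce to the observation that, by indistinguishability and the atomless nature of each agent in the mean-field limit, the game is an identical-interest game, so the potential can be taken to be the reward functional itself ($\Omega\equiv J$, or equivalently $L$ at the instantaneous level). Your finite-$N$ perturbation bound and induction on $k\le T$ is a more careful justification of the step the paper dispatches in one line via the ``indistinguishability condition.''

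The only substantive difference is that the paper, after recording the identical-interest observation, additionally invokes the $C^1$ regularity of $L$ in $m$ (the existence of $\delta L/\delta m$) together with Lemma~4.4 of Monderer--Shapley~\cite{MONDERER1996124}, which is the differential characterisation of potential games via equality of mixed second derivatives. This gives an alternative, calculus-based verification of the potential property that does not pass through the atomless-deviation argument; your route is more self-contained and does not need that extra assumption, while the paper's secondary route ties the result to the classical Monderer--Shapley criterion.
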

We defer the proof of the theorem to the appendix. The key insight of Theorem \ref{DTMFGsarepotential} is that the M-NE of MFGs can be computed by considering a general form of a \textit{team game} in which each agent seeks to maximise  the potential function. Crucially, thanks to Theorem \ref{DTMFGsarepotential}, the problem of computing the equilibrium policy is reduced to solving a control problem for the potential function. \subsection{Learning in Large Population MAS}
$\indent$ We now develop a model-free decentralised learning procedure based on a variant of fictitious play using the potentiality property. This generates a sequence of polices that converges to the M-NE of the discrete-time MFG.    \newline
 \indent Firstly, it is necessary to introduce some concepts relating to convergence to equilibria:
\begin{definition}\label{path} 
Let $\{\pi^{i,n}\}_{n\geq 1}$ be a set of policies for agent $i\in\mathcal{N}$. We define a \textit{path} by a sequence of strategies $\rho_\pi^i\triangleq(\pi^{i,n})_{n\geq 1}\in\times_{n\geq 1}\Pi^i$, where $\pi^{i,n+1}$ is obtained from an update of $\pi^{i,n}$ using some given learning rule.   
\end{definition} 
\begin{definition}\label{improvement path}
Given $\pi^{-i}\in\Pi^{-i}$, the path $\rho_\pi^i\in\times_{n\geq 1}\Pi^i$ is called an \textit{improvement path} for agent $i$ if after every update the agent's expected reward increases, formally  an improvement path satisfies the following condition:
\begin{equation}
J^i[\cdot,\pi^{i,n+1},\pi^{-i}]\geq J^i[\cdot,\pi^{i,n},\pi^{-i}],\hspace{3 mm} \forall i \in \mathcal{N}. \label{improvementpath}
\end{equation}
\end{definition}
\begin{definition}
 A path converges to equilibrium if each limit point is an equilibrium.
\end{definition}
We now describe a \lq belief-based\rq$ $ learning rule known as fictitious play \cite{brown1951} of which our method is a variant: Let $\rho^i_\pi\in\times_{n\geq 1}\Pi^i$ be a path, then the learning rule is a \emph{fictitious play process} (FPP) if the update in the sequence $\{\pi^{i,n}\}_{n\geq 1}$ is performed $\forall x\in\mathcal{S}$,  $\forall i\in\mathcal{N}$ as follows:\newline
\begin{equation}
\sup_{\pi'\in\Pi}J^i[x,\pi',\pi^{-i,n}]=J^i[x,\pi^{i,n+1},\pi^{-i,n}],    
\end{equation}
so that $\pi^{i,n+1}$ is a best-response policy against $\pi^{-i,n}$.\newline
If the FPP converges to equilibrium then we say that the game has the \textit{fictitious play property}. \newline
\indent We now apply these definitions to the case of MFGs. We note that by the indistinguishability assumption for MFGs, we can drop the agent indices in each of the above definitions. We define the FPP for the MFG (B) by the following learning procedure $\forall x\in\mathcal{S}$:
\begin{gather*}
\hspace{-1.6 mm}\sup_{\pi'\in\Pi}J[x,\pi',\bar{m}^n_{x}]=J[x,\pi^{n+1},\bar{m}^n_{x}],\hspace{1.8 mm}
\bar{m}^n_{x}\triangleq\frac{1}{n}\sum_{j=1}^nm^j_{x}
\end{gather*} so that $\pi^{n+1}$ is a best-response policy against $\bar{m}^n_{x}$ which summarises each agent's belief of the joint state of all agents after the $n^{th}$ update.\newline
 In order to solve game (B) we therefore seek a learning process that produces a sequence $\{(\pi^n,m^n)\}_{n\geq 1}$ for which $\{\pi^n\}_{n\geq 1}$ is an improvement path for the policy $\pi^n$.\newline 
\indent  For the discrete-time MFG, we shall seek a pair $(\pi^n,m^n)\in\Pi\times\mathbb{P}(\mathcal{H})$ that converges to equilibrium as $n\to\infty$ so that the sequence $\{(\pi^n,m^n)\}_{n\in\mathbb{N}}$ converges to a cluster point $(\tilde{\pi},\tilde{m})$ which is a solution to (B).\newline
\indent Our convergence result is constructed using results that we now establish. Before proving the result we report an important result in the model-based setting: 
\begin{proposition}[Cardaliaguet, Hadikhanloo; 2017]Mean Field Games have the fictitious play property.\end{proposition}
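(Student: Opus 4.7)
The plan is to exploit the potential structure established in Theorem~\ref{DTMFGsarepotential}, adapting the classical Monderer--Shapley argument for finite potential games to the infinite-dimensional, measure-valued setting of the discrete-time MFG. By (\ref{potentialcondition}), for each fixed $m\in\mathbb{P}(\mathcal{H})$ the best-response map $\pi\mapsto\arg\max_{\pi'\in\Pi}J[\cdot,\pi',m]$ coincides with $\arg\max_{\pi'\in\Pi}\Omega[\cdot,\pi',m]$, so each fictitious-play iterate $\pi^{n+1}$ is simultaneously a best response for $J$ and for the single-objective potential $\Omega$ with the running average $\bar m^{n}_{x}=(1/n)\sum_{j=1}^n m^{j}_{x}$ held fixed. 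This observation collapses the apparent non-stationarity seen by a single agent into a single-agent control problem driven by $\bar m^{n}$.

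Next I would introduce an exploitability functional
\[
e^{n} \;\triangleq\; \sup_{\pi'\in\Pi}J[\cdot,\pi',\bar m^{n}_{x}] \;-\; \frac{1}{n}\sum_{j=1}^{n}J[\cdot,\pi^{j},\bar m^{n}_{x}],
\]
which is non-negative and measures how far $\bar m^{n}$ is from being induced by an equilibrium policy. Using the potentiality identity to convert differences of $J$ into differences of $\Omega$, the best-response property of $\pi^{n+1}$ against $\bar m^{n}$, and a Lipschitz estimate for $\Omega$ in its measure argument (with respect to a Wasserstein-type metric on $\mathbb{P}(\mathcal{H})$, justified by compactness of $\mathcal{S}$ and $\mathbb{A}^{i}$ together with regularity of $L$, $f$, $g$), I would derive a recursion of the form $e^{n+1}\le (1-1/(n+1))\,e^{n}+C/n^{2}$. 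A standard discrete Gronwall argument then forces $e^{n}\to 0$.

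Compactness of $\Pi$ together with tightness of $\{\bar m^{n}\}$ in $\mathbb{P}(\mathcal{H})$ yields a subsequence $(\pi^{n_{k}},\bar m^{n_{k}})\to(\tilde\pi,\tilde m)$. Passing to the limit in (\ref{statetransitionMFG})--(\ref{mtransitionMFG}) using continuity of $f$ and $g$, together with $e^{n}\to 0$ and upper semicontinuity of the best-response correspondence, shows that $(\tilde\pi,\tilde m)$ satisfies the M-NE system (\ref{controlMFG*})--(\ref{mtransitionMFG*}). The Lasry--Lions uniqueness result, valid under the strict monotonicity of $L$ in $m$, then upgrades subsequential convergence to convergence of the full sequence.

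The hard part will be the Lyapunov step: quantifying the regularity of the potential $\Omega$ with respect to $m$ so that the slow drift $\bar m^{n+1}-\bar m^{n}=\mathcal{O}(1/n)$ translates into a summable perturbation of the associated best-response problem. In the finite-action Monderer--Shapley setting this estimate is essentially combinatorial, but here one must propagate smoothness of $f$, $g$, and $L$ in the measure argument through the finite horizon $T$ and verify that the implicit constants do not degrade with $T$. Once this modulus of continuity is in hand, the remainder of the proof is a routine compactness-and-passage-to-the-limit exercise driven by the potentiality of the game.
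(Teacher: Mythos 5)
Your overall architecture is the right one, and it is essentially the argument of Cardaliaguet and Hadikhanloo that this proposition is quoting (the paper itself offers no proof, deferring entirely to that reference): exploit the potential structure from Theorem~\ref{DTMFGsarepotential}, run a Monderer--Shapley-style Lyapunov argument on an exploitability functional along the fictitious-play averages, and finish with compactness plus the Lasry--Lions uniqueness under strict monotonicity. The identification of best responses for $J$ and for $\Omega$ at fixed $\bar m^{n}$, and the final compactness-and-passage-to-the-limit step, are both sound.

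The gap is in the recursion $e^{n+1}\le(1-\tfrac{1}{n+1})e^{n}+C/n^{2}$. A Lipschitz estimate for $\Omega$ (or $J$) in the measure argument, combined with $\|\bar m^{n+1}-\bar m^{n}\|=\mathcal{O}(1/n)$, only yields $e^{n+1}\le(1-\tfrac{1}{n+1})e^{n}+C/n$: each replacement of $\bar m^{n+1}$ by $\bar m^{n}$ (inside the supremum and inside the running average of past payoffs) costs a full Lipschitz constant times $\mathcal{O}(1/n)$, with no extra factor of $1/n$. But the recursion with a $C/n$ remainder only gives $e^{n}=\mathcal{O}(1)$ (set $u_{n}=n\,e^{n}$; then $u_{n+1}\le u_{n}+C'$), so $e^{n}\to 0$ does not follow and the convergence proof collapses at exactly this point. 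No first-order modulus of continuity can repair this; the $\mathcal{O}(1/n^{2})$ remainder comes from a genuinely second-order mechanism. One must expand the potential in the measure variable,
\begin{equation*}
\Omega(\bar m^{n+1})=\Omega(\bar m^{n})+\frac{1}{n+1}\Big\langle \frac{\delta \Omega}{\delta m}(\bar m^{n}),\,m^{n+1}-\bar m^{n}\Big\rangle+\mathcal{O}(1/n^{2}),
\end{equation*}
and then use the best-response property of $\pi^{n+1}$ together with the identification of $\frac{\delta \Omega}{\delta m}$ with the instantaneous reward --- this is precisely what Assumption 5, the $C^{1}$ structure via $\delta L/\delta m$, provides --- to show that the first-order term is bounded above by $-\frac{1}{n+1}e^{n}$ up to the second-order remainder. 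The contraction factor and the summable remainder are thus produced by one and the same first-order cancellation, not by the best-response property and a Lipschitz bound acting separately. You correctly flagged the regularity of $\Omega$ in $m$ as the hard part, but the fix is not a sharper Lipschitz constant: it is the differentiable-in-$m$ structure plus the cancellation, after which one typically still only obtains $\sum_{n}\frac{1}{n+1}e^{n}<\infty$ and needs a separate equicontinuity argument on $n\mapsto e^{n}$ to conclude $e^{n}\to 0$.
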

This result was established in \cite{cardaliaguet} within a continuous-time and model-based setting. Here, given some initial belief about the distribution $m^n$ and some initial value function $v^n$ associated to each agent's problem, the agents update the pair $(v^n,m^n)$ according to a (model-based) fictitious play procedure. This produces a paired sequence $\{(v^n,m^n)\}_{n\geq 1}$ for which $\lim_{n\to\infty} (v^n,m^n)=(\tilde{v},\tilde{m})$ where $(\tilde{v},\tilde{m})$ is joint solution to the continuous-time MFG.\newline
\indent In order to compute the best responses at each step, the FPP discussed in \cite{cardaliaguet} requires agents to use knowledge of their reward functions. Moreover, the agents' update procedure involves solving a system of partial differential equations at each time step. Obtaining closed solutions to this system of equations is generally an extremely difficult task and often no method of obtaining closed solutions exists.  \newline 
\indent We are therefore interested in procedures for which agents can achieve their M-NE policies by simple adaptive play with no prior knowledge of the environment. 
 We are now in position to state our main result:
\begin{theorem}\label{fictitousplaytheorem}
 There exists a fictitious-play improvement path process such that the sequence $\{(\pi^n,m^n)\}_{n\geq 1}\in\Pi\times\mathbb{P}(\mathcal{H})$ converges to an $\epsilon-$M-NE of the game (A).
\end{theorem}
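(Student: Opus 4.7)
The plan is to build the proof from three pieces that mirror the structure of the paper: (i) reduce the MFG (game B) to a single-objective optimal control problem via Theorem \ref{DTMFGsarepotential}; (ii) run a model-free fictitious-play scheme whose inner loop is a learning-based best response against a running average of empirical measures; and (iii) transfer the resulting M-NE of (B) to an approximate M-NE of the finite $N$-player game (A), with the advertised error bound $\epsilon \sim \mathcal{O}(1/\sqrt{N})$ coming from step (iii).

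For step (i), Theorem \ref{DTMFGsarepotential} supplies a potential function $\Omega$ with the property (\ref{potentialcondition}) that any unilateral payoff change equals the corresponding change in $\Omega$. Fix the running belief $\bar{m}^n_x = \tfrac{1}{n}\sum_{j=1}^n m^j_x$. Because game (B) is a potential game, the per-iteration best response against $\bar{m}^n$ coincides with a maximiser of $\Omega(\cdot, \bar{m}^n_x)$, so an agent that computes $\pi^{n+1} \in \arg\max_{\pi' \in \Pi} J[x, \pi', \bar{m}^n_x]$ does not decrease the potential. Hence $\{\pi^n\}_{n\geq 1}$ is an improvement path in the sense of (\ref{improvementpath}), and compactness of $\Pi$ together with continuity of $\Omega$ force $\{\Omega(\pi^n, \bar{m}^n_x)\}$ to converge monotonically to a limit.

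For step (ii), the best response at each outer iteration has to be produced without knowing the reward model. I would instantiate the inner loop with a standard off-policy temporal-difference (e.g.\ Q-learning) solver applied to the stationary single-agent MDP obtained by freezing the belief to $\bar{m}^n_x$; this is well posed precisely because all of the influence of the other agents has been collapsed into one object, which is what resolves the non-stationarity concern raised earlier. Under the usual Robbins--Monro step-size conditions and persistent exploration, the solver converges almost surely to an optimal policy for this MDP, giving the approximate best response that the outer loop requires. Combining the outer improvement-path argument with this inner stochastic-approximation convergence yields the discrete-time, model-free analogue of Proposition 2, and by strict monotonicity (Proposition 1) any cluster point of $\{(\pi^n,m^n)\}_{n\geq 1}$ coincides with the unique M-NE $(\tilde{\pi},\tilde{m})$ of (B).

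Step (iii) then converts the M-NE of (B) into an $\epsilon$-M-NE of (A): when $N$ agents all play $\tilde{\pi}$, indistinguishability makes their states exchangeable, and a Hoeffding-type bound on bounded test functions gives $\|m_{x_k} - \tilde{m}\| \lesssim 1/\sqrt{N}$ uniformly in $k \leq T$; continuity of $L$ in its measure argument propagates this into a reward deviation of the same order, which is exactly (\ref{nasheqE}) with $\epsilon \sim \mathcal{O}(1/\sqrt{N})$. The main obstacle is step (ii): rigorously coupling the outer fictitious-play update, which changes the environment from $\bar{m}^n$ to $\bar{m}^{n+1}$, with an inner learner that only asymptotically tracks a best response. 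One must show that the finite-sample error of the inner loop is compatible with the monotone-ascent argument on $\Omega$, so that the improvement-path condition (\ref{improvementpath}) survives in expectation and the limit argument still forces convergence to a cluster point of (B). The potential-game reduction is what makes this tractable, because it collapses a delicate fixed-point computation into a one-dimensional Lyapunov argument on the scalar $\Omega$.
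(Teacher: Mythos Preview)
Your proposal is a plausible outline but it departs from the paper's actual proof in its core technical device. You organise the argument around the potential function $\Omega$ from Theorem~\ref{DTMFGsarepotential} as a Lyapunov function, with a nested inner/outer loop in which an inner TD/Q-learner is run to (near) convergence against a frozen belief $\bar m^n$ before the outer fictitious-play average is updated. The paper does \emph{not} proceed this way: it never invokes $\Omega$ explicitly in the proof of Theorem~\ref{fictitousplaytheorem}, and it does not separate the procedure into a converged inner loop and an outer update. Instead it casts both the value-function update and the belief update as \emph{simultaneous} stochastic-approximation recursions
\[
v_{n+1}=v_n+\alpha(n)\big[A(\pi^n,\tilde m^n_x)+M^{(\pi)}_{n+1}\big],\qquad
\tilde m^{n+1}_x=\tilde m^n_x+\beta(n)\big[B'(v_n,\tilde m^n_x)+M^{(\tilde m)}_{n+1}\big],
\]
with step sizes chosen so that $\alpha(n)/\beta(n)\to 0$. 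This is the two-timescale framework of Borkar: the slow $\tilde m$-process is quasi-static relative to the fast $v$-process, boundedness of $L$ and Lipschitz/growth conditions on $A,B'$ give $\sup_n|v_n|<\infty$, $\sup_n|\tilde m^n_x|<\infty$, $\sum_n\alpha(n)M^{(\pi)}_n<\infty$, $\sum_n\beta(n)M^{(\tilde m)}_n<\infty$, and Theorem~2, Ch.~6 of \cite{borkar} then yields $(v_n,\tilde m^n)\to(v^{\lambda(m),m},m)$ a.s. The transfer to an $\epsilon$-M-NE of (A) is deferred entirely to Theorem~\ref{approximationtheorem}, as you also do.

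The practical difference is precisely the obstacle you flag at the end. In your scheme you must show that the finite-sample error of the inner learner is small enough that the monotone-ascent property on $\Omega$ survives once $\bar m^n$ is allowed to move; note also that your claim ``$\pi^{n+1}\in\arg\max_{\pi'}J[x,\pi',\bar m^n_x]$ does not decrease the potential, hence $\{\pi^n\}$ is an improvement path'' is not literally correct, since (\ref{improvementpath}) fixes $\pi^{-i}$ whereas $\bar m^n$ changes between steps---closing this requires the Monderer--Shapley style argument you allude to but do not carry out. The paper's two-timescale route sidesteps this entirely: by making $\tilde m^n$ evolve on the slower clock, the coupling is absorbed into the hypotheses of Borkar's theorem and no separate Lyapunov bookkeeping on $\Omega$ is needed. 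Your decomposition buys modularity and a clearer link to Theorem~\ref{DTMFGsarepotential}; the paper's buys a one-shot convergence statement without having to quantify inner-loop accuracy.
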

 The following corollary demonstrates that we can construct a learning procedure that leads to an improvement path, the limit point of which is a solution of the MFG (B). \begin{corollary}\label{convergencepropostion}
Let $\{(u^n,m^n)\}_{n\geq 1}$ be a mean field improvement path generated by an actor-critic fictitious play method, then $\{(u^n,m^n)\}$ converges to a cluster point $\{(u,m)\}$. Moreover, the cluster point $\{(u,m)\}$ is a solution to (B).
\end{corollary}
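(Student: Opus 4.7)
The plan is to leverage the potential game reduction provided by Theorem \ref{DTMFGsarepotential} together with the fictitious play convergence result of Theorem \ref{fictitousplaytheorem}, and to show that an actor-critic update implements a valid improvement path on the potential. Concretely, I would first invoke Theorem \ref{DTMFGsarepotential} to obtain a potential function $\Omega$ whose differences along unilateral deviations coincide with differences in the individual reward $J$. Since the agents are indistinguishable in the MFG, a policy update by a single representative agent that increases $J[\cdot,\pi,m]$ (holding $m$ fixed) is equivalent to an increase of $\Omega$. Thus any mean-field improvement path $\{(u^{n},m^{n})\}$ is simultaneously a monotone non-decreasing sequence on $\Omega$.

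Next I would establish convergence of the scalar sequence $\Omega(u^{n},m^{n})$. The instantaneous reward $L$ is continuous on the compact product $\mathcal{S}\times\mathbb{P}(\mathcal{H})\times\mathbb{A}$ with finite horizon $T$, so $\Omega$ is bounded; combined with the improvement path inequality (\ref{improvementpath}) this gives monotone convergence of $\Omega(u^n,m^n)$ to some $\Omega^{\star}$. Compactness of $\Pi$ (by assumption) and of $\mathbb{P}(\mathcal{H})$ (with the weak topology, guaranteed by the compactness of $\mathcal{S}$) then allows me to extract a convergent subsequence $(u^{n_k},m^{n_k})\to(u,m)$, yielding the desired cluster point.

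To show $(u,m)$ solves game (B), I would argue by contradiction: suppose there existed $\pi'\in\Pi$ with $J[\cdot,\pi',m]>J[\cdot,u,m]$. By the potential identity in Definition \ref{potentialdef} this would lift to a strict gain in $\Omega$, contradicting that the actor-critic update at $(u,m)$ produces no further increase (the actor step coincides with a fixed point of the best-response map at the limit, since $\Omega(u^n,m^n)\to\Omega^{\star}$ forces the gradient-ascent-like increments of the actor to vanish). Hence $u\in\arg\max_{\pi\in\Pi}J[\cdot,\pi,m]$, which is precisely (\ref{controlMFG*}). Consistency of $m$ with $u$ under the transition $g$ in (\ref{mtransitionMFG}) is inherited from the mean-field update rule $\bar{m}^{n}$ in the fictitious play protocol, so the full triplet (\ref{controlMFG*})–(\ref{mtransitionMFG*}) is satisfied at $(u,m)$.

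The main obstacle, and the place that requires the most care, is justifying that the model-free actor-critic step truly realises an improvement in $J[\cdot,\pi,m^{n}]$ at each iteration, despite the critic relying on noisy TD-style estimates of the value function rather than an exact evaluation. I would handle this with a two-timescale stochastic approximation argument: on the fast scale, the critic converges to the true value function $J[\cdot,u^{n},m^{n}]$ for the frozen pair $(u^{n},m^{n})$; on the slow scale, the actor performs a policy improvement that is, up to a vanishing martingale error, a true improvement step. Standard ODE-based analysis (e.g.\ the Borkar--Meyn framework implicit in Theorem \ref{fictitousplaytheorem}) then yields the improvement-path condition (\ref{improvementpath}) asymptotically, closing the argument and reducing the corollary to the deterministic reasoning above.
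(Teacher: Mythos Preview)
Your overall strategy diverges from the paper's and contains a genuine gap at the monotonicity step. The paper does not deduce the corollary from a monotone-potential argument; instead (see the proof of Theorem~\ref{fictitousplaytheorem}, from which the corollary is immediate) it casts the belief update $\tilde m^{n}$ and the value/policy update $v_n$ as a coupled pair of stochastic approximations running on two timescales, with the belief on the \emph{slow} scale (step sizes chosen so that $\beta(n)/\alpha(n)\to 0$). For quasi-static $\tilde m$, the inner recursion converges to the best response against $\tilde m$; Borkar's two-timescale theorem (ch.~6, theorem~2 in \cite{borkar}) then delivers convergence of the joint iterate $(v_n,\tilde m^n)$ to the fixed point $(v^{\lambda(m),m},m)$ of the MFG system. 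The potential-game reduction (Theorem~\ref{DTMFGsarepotential}) enters only to ensure that the inner problem is a well-posed single-agent OCP, not as a Lyapunov function along the joint path.

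The specific gap in your argument is the claim that ``any mean-field improvement path $\{(u^{n},m^{n})\}$ is simultaneously a monotone non-decreasing sequence on $\Omega$''. The improvement-path condition (\ref{improvementpath}) together with the potential identity in Definition~\ref{potentialdef} only gives $\Omega(u^{n+1},m^{n})\geq \Omega(u^{n},m^{n})$, i.e.\ monotonicity in the policy coordinate with $m$ held fixed. But the fictitious-play protocol also moves $m^{n}\mapsto m^{n+1}$ via the averaging rule, and nothing you have written controls the sign of $\Omega(u^{n+1},m^{n+1})-\Omega(u^{n+1},m^{n})$; so $\Omega(u^n,m^n)$ need not be monotone along the joint path, and the bounded-monotone convergence step collapses. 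Your final two-timescale paragraph does not repair this: the fast/slow separation you invoke is between critic and actor for \emph{fixed} $m^n$, whereas what is actually needed (and what the paper does) is a fast/slow separation between the policy and the belief $m^n$, so that $m$ is effectively frozen while the policy equilibrates. Without that, neither the monotonicity of $\Omega$ nor the conclusion that the actor increments vanish at the cluster point is justified.
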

 Corollary \ref{convergencepropostion} immediately leads to our method which computes the optimal policies for the MFG (B). The method uses an actor-critic framework with TD learning on the critic and policy gradient on the learner. An episode is simulated using some initial belief over the distribution $m$ over $\mathcal{S}$. The agent then updates its policy using an actor-critic and updates the distribution $m^k$ fictitiously.
\subsection{The Approximation Error}
$\indent$In this section, we show that the Nash equilibria generated by the game (B) are approximate equilibria for the $N-$player stochastic game (A). 
\begin{theorem}\label{approximationtheorem}
 Let $\bar{\pi},\tilde{\pi}\in{\Pi}$ be the NE strategy profile for the game (A) and game (B) respectively and let  be a NE strategy profile for the MFG (B). Let $\tilde{m}_x$ and $\bar{m}_x$ be the distributions generated by the agents in the mean field game and the $N-$player SG respectively; then there exists a constant $c>0$ s.th $\forall x \in  \mathcal{S}$:
 \begin{equation}
|J(x,\tilde{\pi},\tilde{m}_x)-J(x,\bar{\pi},\bar{m}_x)|<c\backslash \sqrt{N}.
\end{equation}
\end{theorem}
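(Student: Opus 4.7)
The plan is to derive the bound in two steps: first control the distance between the empirical measure $\bar m_x$ induced by $N$ agents all playing $\tilde\pi$ and the mean-field distribution $\tilde m_x$ via a law of large numbers; then transfer that distributional bound through the value functional using a Lipschitz/regularity argument, and finally close the loop using the Nash-equilibrium optimality conditions on both sides. Throughout I would invoke the indistinguishability hypothesis (so we may drop agent indices and treat each agent symmetrically) together with standard MFG regularity conditions (bounded, Lipschitz $L$ and $f$ in the measure argument with respect to the Wasserstein-$1$ metric $W_1$), which are implicit in the model specification.

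The first step is the probabilistic one. Suppose every agent in game (A) uses the MFG equilibrium policy $\tilde\pi$. Because the $\zeta_k$ are i.i.d. and the dynamics $f$ are identical for each agent, the states $(x^i_k)_{i=1}^N$ at any time $k$ form an exchangeable (in fact i.i.d.) sample from the marginal law $\tilde m_{x_k}$ defined by the MFG fixed point. A standard empirical-process / LLN estimate (e.g.\ the Dudley bound for $W_1$ in $d$-dimensional spaces, or the Dvoretzky--Kiefer--Wolfowitz inequality on $\mathbb{R}$) yields $\mathbb{E}\bigl[W_1(\bar m_{x_k},\tilde m_{x_k})\bigr]\le c_0/\sqrt{N}$ uniformly in $k\le T$, where $c_0$ depends only on the diameter of $\mathcal S$ and the horizon $T$. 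This is precisely the source of the $1/\sqrt N$ rate in the statement.

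The second step passes from distributional closeness to closeness of payoffs. Since $L$ is Lipschitz in its mean-field argument (with constant $\ell_L$) and the cumulative reward $J$ sums at most $T$ stage rewards, one obtains
\begin{equation*}
\bigl|J[x,\tilde\pi,\bar m_x]-J[x,\tilde\pi,\tilde m_x]\bigr|\;\le\;\ell_L\,T\;\mathbb{E}\bigl[\sup_{k\le T}W_1(\bar m_{x_k},\tilde m_{x_k})\bigr]\;\le\;\frac{c_1}{\sqrt N}.
\end{equation*}
Now combine this with the two Nash inequalities: $\tilde\pi$ is optimal against $\tilde m_x$ in game (B), while $\bar\pi$ is a best response against $\bar m_x$ in game (A). A unilateral deviation by a single agent changes the empirical measure by at most $1/N$ in total variation, so the opponent distribution faced in game (A) when the agent deviates from $\tilde\pi$ is within $\mathcal O(1/\sqrt N)$ of $\tilde m_x$; using this in the NE inequality for $\bar\pi$ and the previous display yields the claimed $|J(x,\tilde\pi,\tilde m_x)-J(x,\bar\pi,\bar m_x)|<c/\sqrt N$.

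The principal obstacle is the \emph{dynamic propagation}: a naive application of Lipschitz continuity at a single stage gives a $W_1$-bound that can compound through the transition map $g$, potentially producing a Lipschitz constant that grows exponentially in $T$. To keep the constant $c$ genuinely independent of $N$, I would argue by induction on $k$ that $\mathbb{E}[W_1(\bar m_{x_k},\tilde m_{x_k})]\le c_0(k)/\sqrt N$, using the contraction/non-expansiveness of $g$ under the assumed Lipschitz regularity (this is where the strict monotonicity hypothesis from Proposition~1 plays its role, guaranteeing uniqueness and stability of the fixed point $\tilde m_x$). A secondary, but easier, technical point is controlling the effect of a single agent's deviation on $\bar m_x$; this contributes only an $\mathcal O(1/N)$ term and is therefore absorbed into the dominant $1/\sqrt N$ rate.
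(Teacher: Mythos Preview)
Your proposal shares the core ingredients of the paper's proof---a law-of-large-numbers estimate producing the $1/\sqrt{N}$ rate, Lipschitz continuity of the reward, and an induction over the time index to control dynamic propagation---but it takes a different route and, in one place, leaves a gap that the paper's argument avoids.

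The main difference is the object on which the induction is run. You work at the level of measures, aiming to bound $\mathbb{E}[W_1(\bar m_{x_k},\tilde m_{x_k})]$ and then push that through $J$ via Lipschitz continuity in the measure argument. The paper instead works at the level of \emph{state trajectories}: it invokes a result of Xu (2007) to write the closed-loop equilibrium policy as a Lipschitz feedback $\pi=\Gamma(x^i,x^{-i})$, so that the transition $f(x_k^i,a_k^i,\zeta_k)$ can be rewritten as a bounded Lipschitz function $\Theta(x_k^i,x_k^{-i},\zeta_k)$ of the states alone. The induction then bounds $\|x_k^N-\tilde x_k\|$ directly, using a variance estimate $\mathrm{var}\big(\tfrac{1}{N}\sum_{j\neq i}\Theta(x_k^i,x_k^j,\cdot)\big)\le c_k/N$ together with Cauchy--Schwarz for the base case, and the Lipschitz property of $\Theta$ plus the inductive hypothesis for the step. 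The payoff bound then follows from Lipschitzianity of $L$ in the state variables. No Nash-inequality sandwiching is used at all.

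The gap in your argument is precisely where the sandwiching enters. Your LLN estimate in Step~1 concerns the empirical measure when \emph{all agents play $\tilde\pi$}; but $\bar m_x$ in the theorem is the distribution generated when all agents play the $N$-player equilibrium $\bar\pi$, not $\tilde\pi$. Your sentence ``a unilateral deviation by a single agent changes the empirical measure by at most $1/N$'' only controls the difference between $\bar m$ and the measure when \emph{one} agent switches from $\bar\pi$ to $\tilde\pi$ while the remaining $N-1$ continue to play $\bar\pi$; it says nothing about the distance between $\bar m$ and $\tilde m$. If you unwind the two Nash inequalities you invoke, both directions reduce to a term of the form $|J(x,\pi,\tilde m)-J(x,\pi,\bar m)|$, which in turn requires a bound on $W_1(\bar m,\tilde m)$---the very quantity you have not yet controlled. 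The appeal to strict monotonicity for ``stability of the fixed point $\tilde m$'' is the right instinct, but it is not developed into a quantitative estimate. The paper sidesteps this circularity entirely: by absorbing the policy into the state via the Lipschitz feedback $\Gamma$, both the $N$-player and the mean-field trajectories are driven by the \emph{same} function $\Theta$, and the single induction on $\|x_k^N-\tilde x_k\|$ simultaneously controls the discrepancy in states, actions, and hence the induced distributions.
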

 Theorem \ref{approximationtheorem} says that the solution to the MFG (B) is in fact an $\epsilon-$M-NE to the $N-$agent SG (A). Moreover, the approximation error from using a MFG to approximate the  $N-$player game is $\mathcal{O}\big(\frac{1}{\sqrt{N}}\big)$. \newline 
\indent As a direct consequence of theorem \ref{approximationtheorem}, we can use the actor-critic fictitious play method on the MFG formulation (B) to compute near-optimal policy solutions of the stochastic game (A). Moreover, the error produced by the mean field approximation vanishes asymptotically as we consider systems with \emph{increasing} numbers of agents. 
\section{Experiments}
$\indent$ To investigate convergence of our method, we present three experiments drawn from benchmark problems within economics and control theory that involve large populations of strategically interacting agents. In each case, we show that our method converges to an M-NE policy.\newline 
\indent We firstly demonstrate the use of our technique in a (stochastic) congestion game, testing the convergence to a stationary policy in a large-population system with a complicated reward structure. The second problem is a supply and demand problem that we formulate as an SG allowing us to test convergence to optimal policies in dynamic problems requiring long-term strategic planning in the presence of other learning agents. This demonstrates that our method is able to overcome the non-stationary interference of other adaptive agents. Lastly, we apply our method to study a multi-agent generalisation of a fundamental problem within optimal control theory, namely the linear quadratic control (LQC) problem. The analytic solution of the LQC problem allows us to verify that our method converges to a known M-NE policy.\\
\textbf{Experiment 1:} \textit{Spatial Congestion Game}\\
 \indent In the spatial congestion game the rewards are dependent on the agents' use of a shared resource (a sub-region of $\mathcal{S}\subset \mathbb{R}^2$) and the number agents using that resource. Games of this type are known as \emph{congestion games} and represent a large class of interactions e.g. spectrum sharing problems.\newline \indent
In the spatial congestion (SC) game there are $N$ agents, given some initial position $x_0\in\mathcal{S}$, each agent chooses an action in order to move to a desired location $x_T\in\mathcal{S}$ which is a terminal state. Certain areas of $\mathcal{S}$ are more desirable to occupy than others, however the agents are averse to occupying crowded areas - they receive the largest rewards for occupying parts of $\mathcal{S}$ that are both desirable and have relatively low concentrations of agents. The agents simultaneously select a movement vector $u\in\mathbb{R}^{2\times 1}$ resulting in movement to a terminal state $x_T$. Each agent then receives its reward $L$ which depends on the desirability of the location and the concentration of agents $m_{x_T}$ at $x_T$.\newline 
\indent Formally, we model the desirability of a region $x_t \in \mathcal{S}$ at time $t$ as\footnote{We note that the function $L$ is continuously differentiable in $\mathcal{S}$ so that assumption 2 is \textit{a fortiori} satisfied, moreover, it can be easily verified that assumption 3 holds.}: 
\begin{equation*}
L(x_t,m_{x_t})=\frac{1}{2\pi\sqrt{|\Sigma|}}\frac{e^{-(x_t-\mu)^T\Sigma^{-1}(x_t-\mu)}}{(1+m_{x_t})^{\alpha}}, \label{Gaussianfunction}
\end{equation*} 
where $m_{x_t}\in\mathbb{P}(\mathcal{H})$ is the density of agents at the point $x_t$ and $\mu\in\mathbb{R}^2, \Sigma\propto 1_{2\times2}$ are given parameters representing the mean and spread of the distribution of the rewards over $\mathcal{S}$. The map $L:\mathcal{S}\times \mathbb{P}(\mathcal{H})\to \mathbb{R}$ measures the instantaneous reward for an agent at $x_t$ with a local agent density $m_{x_t}$. The parameter $\alpha>0$ is a measure of each agent's averseness to occupying the same region as other agents with higher values representing greater averseness. \newline 
Given some initial position $x_0 \in\mathcal{S}$ and $u\in\mathbb{R}^{2\times 1}$, the transition dynamics are given by the following expression:
\begin{equation}
x_T=f_1(x_0,u,\epsilon)\triangleq A_1x_0+B_1u+\sigma_1\epsilon,
\end{equation}
where $\epsilon \sim \mathcal{N}(0,\sigma_\epsilon);\sigma_1, A_1, B_1\propto  1_{(2\times 2)}$ and  $c,\sigma_\epsilon\in\mathbb{R}^+$.
\newline 
The reward function for an agent is then given by the following expression:
\begin{equation*}
\hspace{-1 mm} J[x_0,\pi,m_{x_0}]=\mathbb{E}_{x_T\sim f_1}\Big[L(x_T,m_{x_T})-\frac{1}{2}u^TRu\big| u\sim \pi\Big],    \label{rewardfunctionos1}
\end{equation*}
where $R=\eta 1_{(2\times 2)}$ is a control weight matrix and $\eta\in\mathbb{R}$ is the marginal control cost (cost of movement). Using the indistingishability condition, we have omitted agent indices.\newline
At equilibrium each agent optimally trades-off state-dependent rewards with its proximity to nearby agents.\newline
 \indent The problem generalises the beach domain problem studied in \cite{devlin} since we now consider a reward function with state dependency. In particular, the desirability over $\mathcal{S}$ is described by a Gaussian function over $\mathcal{S}$. Moreover the problem we now consider consists of a system with 1000 interacting agents. The problem is also closely related to the spectrum sharing problem, see \cite{ahmad2010spectrum}.
 \begin{figure}
    \centering
    \includegraphics[width=0.5\columnwidth]{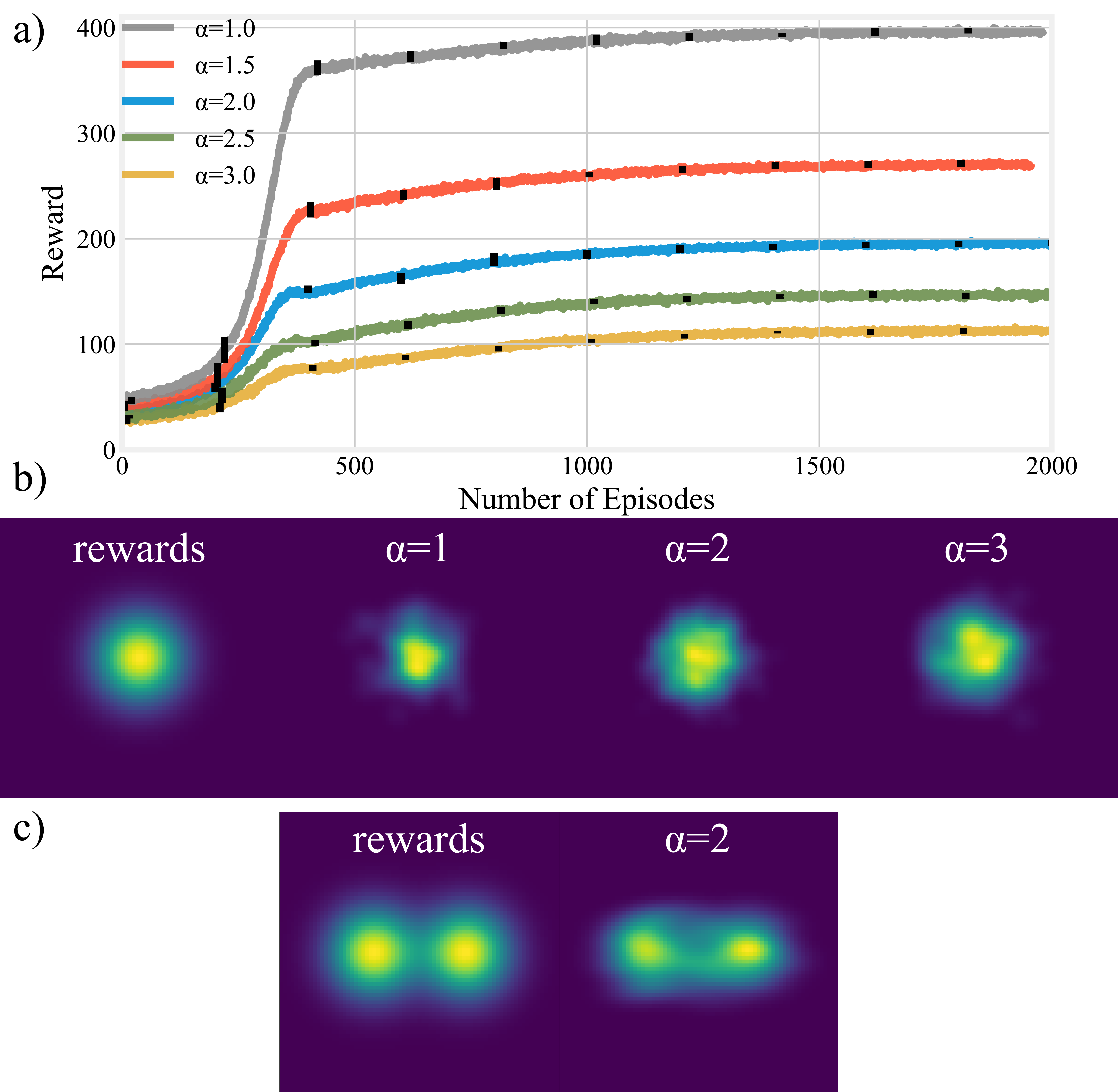}
    \caption{a) Rewards over 2000 episodes of training for a Gaussian distribution  of desirability. b) agent distributions for different averseness parameter $\alpha$. c) agent distribution for bimodal Gaussian function reward function.}
    \label{fig:my_label}
\end{figure}
\indent In accordance with the theory, our method converges to a stable policy - after 2,000 episodes of training we find that the agent's policy and rewards stabilise Figure 1.a).  Figure 1.b) shows the terminal distribution of agents over $\mathcal{S}$ for $\alpha\in\{1.0, 2.0, 3.0\}$. We observe that the agents learn to optimally trade-off state-dependent rewards with distance from neighbouring agents resulting in a fixed terminal distribution of agents. As expected, the agents disperse themselves further as the value of $\alpha$ is increased - in all cases converging to a stable distribution over $\mathcal{S}$.  \\
\indent Figure 1.c) shows the distribution of agents for a more complicated reward structure specified by a mixture of two Gaussians over $\mathcal{S}$  with peaks at $(-1,0)$ and $(0,0)$). We initialise the agents at the point $(1,0)$. In this case, an individual agent's reward is given by the following expression:
\begin{equation*}
\hspace{-1.5 mm}J[x_0,\pi,m_{x_0}]=\mathbb{E}_{x_T\sim f_1}\Big[\sum_{i=1}^2L_i(x_T,m_{x_T})-\frac{1}{4}u_i^TRu_i\Big]
\end{equation*}
where $L_i(x_t,m_{x_t})\triangleq[16\pi^2|\Sigma_i|]^{-\frac{1}{2}}e^{-(x_t-\mu_i)^T\Sigma_i^{-1}(x_t-\mu_i)}\\\cdot(1+m_{x_t})^{-\alpha}$ 
and $u_i\sim\pi$.
Since learning is internal to each agent, a possible (suboptimal) outcome is for the agents to cluster at the nearest peak of rewards. However, using our method, the agents learn to spread themselves across the state space and distribute themselves across both peaks.\newline
\textbf{Experiment 2:} \textit{Supply with Uncertain Demand}\\ Optimally distributing goods and services according to demand is a fundamental problem within logistics and industrial organisation. In order to maximise their revenue, firms must strategically locate their supplies given some uncertain future demand whilst considering the actions of rival firms which may reduce the firm's own prospects.\newline 
 \indent We now apply our method to a supply and demand problem in which individual firms seek to maximise their revenue by strategically placing their goods when the demand process has future uncertainty. The demand process, which quantifies the level of demand associated with each point in space, is \emph{a priori} unknown and is affected by the actions of thousands of rival firms. Each firm directs supply of its goods to regions in time and space however, the firms face transport costs so that each firm seeks to optimally trade-off transportation costs and tracking the demand. As firms begin to concentrate on a particular area of demand, the sale opportunities  diminish, reducing the rewards associated to that region of demand. \newline 
\indent We model this problem as an episodic problem with a distribution of rewards traversing a path through the state space $\mathcal{S}\in\mathbb{R}^2$ (illustrated in Figure 2.a)). Agents seek to locate themselves in areas of high concentrations of rewards for a fixed number (30) time steps. The agents are penalised for both movement and occupying areas with a high density of other agents. The experiment tests the ability of the method to avoid convergence to suboptimal outcomes. In particular, movement costs are highly convex so traversing the path of rewards leads to low overall rewards. \begin{figure}
    \centering
    \includegraphics[width=0.5\columnwidth]{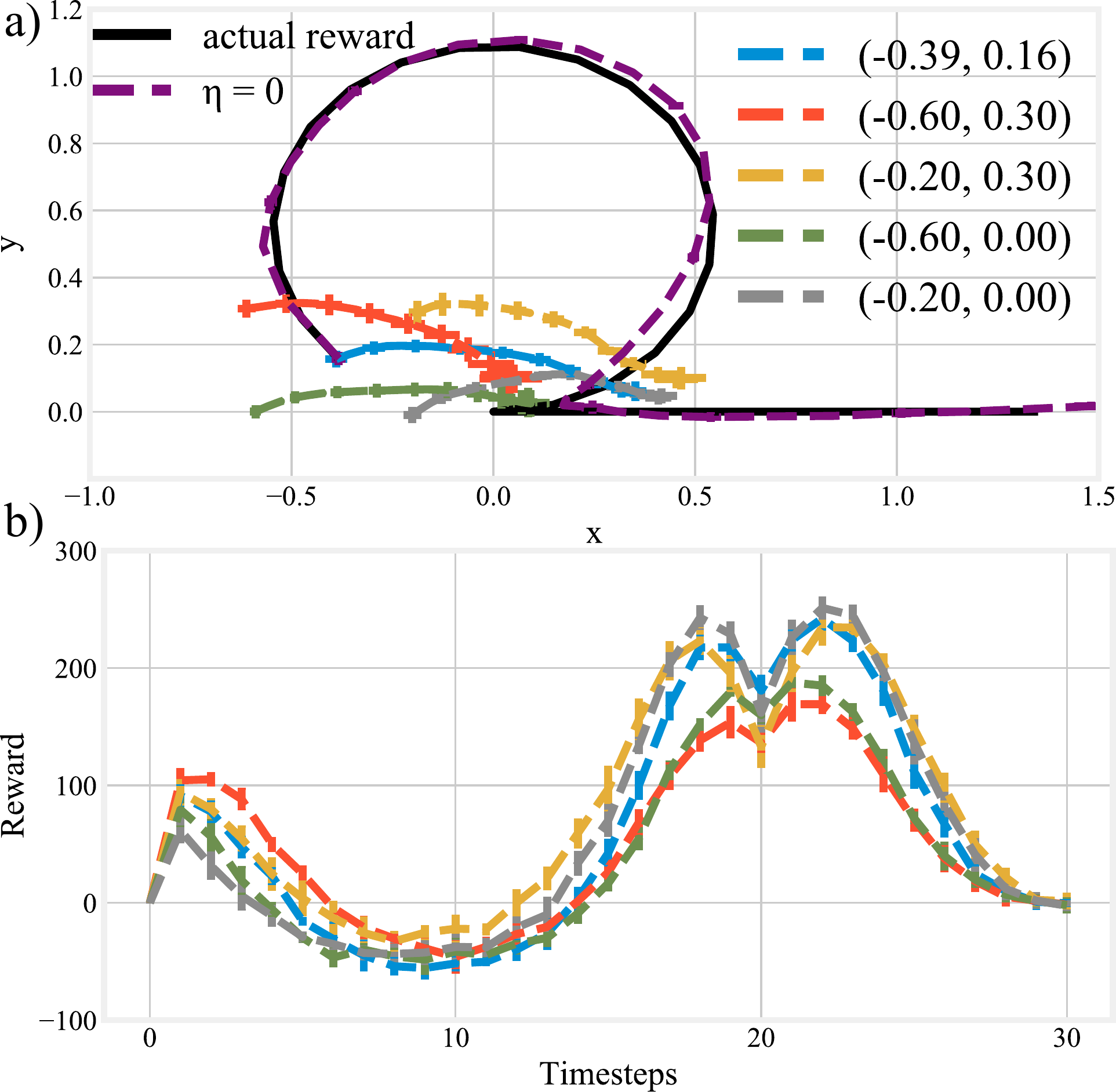}
    \caption{a) Paths of agents with different initial points (coordinates indicated) for cost of movement $\eta=2$. Shown in black is the path of the rewards (demand) and in purple is the path of agents with $\eta=0$. b) The intra-episode rewards for the paths show in a). Note incursion of negative rewards.}
    \label{fig:my_label2}
\end{figure}\\
\indent Figure 2.a) illustrates the path of the agents after training for several different initial positions. An interesting outcome is that with non-zero movement costs, the agents learn to move horizontally to intersect the path of the rewards at a later time. This behaviour conforms with intuition - to maximise  long-term rewards the agents must choose a path that initially incurs higher costs, forgoing immediate rewards whilst they traverse the regions with sparse rewards. \newline 
Figure 2.b) shows that the agents are able to learn to initially incur negative rewards to maximise their cumulative payoffs. Without long-term planning each agent would attempt to trace the same path as the rewards. Such a strategy would lead to reduced overall rewards since attempting to match the locations of the rewards is prohibitively costly. Setting $\eta=0$ we see that the agents trace the path of the rewards (represented by the dashed purple line in Figure 2.a)). \\
\textbf{Experiment 3:} \textit{Mean-Field Linear Quadratic Regulator}\\
 The linear quadratic control (LQC) problem is a fundamental  problem within optimal stochastic control theory (OSCT). It concerns a system whose transition dynamics evolves according to a stochastic process that is linearly controlled subject to quadratic costs. The LQC problem essentially captures the local problem of a large class of problems in OSCT and can be solved analytically, the solution being given by the linear-quadratic regulator \cite{bardi,Xu2015}. LQC models have been extended to mean field interactions in which a large population of agents affect the dynamics of a system using linear controls subject to quadratic control costs and a cost term which depends on the actions of other agents.\newline 
 \indent The reward function for the LQC problem is given by:
\begin{equation*}
J[x_0,u,m_{x_0}]=\mathbb{E}_{x_T\sim f_2}\Big[\sum_{t=0}^{T_n}\{C(x_t,\bar{m}_{x_t})-\frac{1}{2}u_t^TRu_t\}\Big],
\end{equation*}
where $C(x_t,m_{x_t})\triangleq-(x_t-\alpha)^TQ_t(x_t-\alpha)$.\newline 
 At time $k\leq T_n$, given some position $x_k \in\mathcal{S}$, each agent then chooses an vector control parameter $u_t\in\mathbb{R}^{2\times 1}$. The transition are given by the following expression:
\begin{equation}
x_{k+1}=f_2(x_k,u_k,\epsilon_k)\triangleq A_1x_k+B_1u_k+\sigma_1\epsilon_k,
\end{equation}
where $\epsilon_k \sim \mathcal{N}(0,\sigma_{\epsilon_k}) \hspace{1 mm}
\forall k< T_n,  A_1,B_1\propto  1_{(2\times 2)}$, $\sigma_1 = c1_{(2\times 2)}$ where $c\in\mathbb{R}^+$ is some constant that measures the magnitude of the stochasticity in each agent's transition.\\
\indent In \cite{bardi} the distribution $m_x$ is reported after convergence to the stationary M-NE. We compare our results (E3) with this stationary policy (B1) in the following table:
\begin{center}
 \begin{tabular}{||c| c| c ||} 
 \hline
 & B1 & E3 \\ [0.5ex] 
 \hline
 $\mu$ & (0.50000, -0.50000) & (0.50717, -0.50537) \\ 
 \hline
 $\sigma^2$ & 0.14060 & 0.16100  \\  \hline
\end{tabular}
\end{center}
Clearly, our results converge to values that closely replicate the analytic solution.
\section{Conclusion}
We develop an approach to MARL with large numbers of agents. This is the first paper to prove convergence results to best-response policies in multi-agent systems with an unbounded number of agents. This allows RL to be applied across a broader range of applications with large agent populations, in contrast to current methods \cite{Well1,Litt1,minimaxQ}. Our approach advances existing work in MFGs as in \cite{cardaliaguet} that require both full knowledge of the environment and to perform involved analytic computation. In contrast, by developing a connection between RL in MASs and MFGs, we demonstrate a procedure that is model-free, enabling agents to learn best-response policies solely through adaptive play which overcomes the problem of non-stationarity. In our experiments we provide a novel approach of analysing problems in control theory and economics.    
\section{Acknowledgements}
$\indent$ We would like to thank Haitham Bou-Ammar, Sofia Ceppi and Sergio Valcarcel Macua for helpful comments.

\newpage

\appendix
\begin{center}
\section{Supplementary Material}
\end{center}
$\qquad$

\subsection*{Experimental Setup}
$\indent$In all three experiments, all learning was performed in TensorFlow using actor-critic method with the Adam optimiser. The actor is represented by a two-layer fully connected network represented by a neural network with a Gaussian output with variance 0.1. The critic is a two-layer fully connected neural network. The experimental parameters are as follows:
the learning rates in experiments 1, 2 and 3 were set to $10^{-4}$,  $10^{-3}$ and $10^{-3}$ respectively. In all experiments we used a discount factor of 0.99. Experiment 1 is a one-shot game, experiment 2 was run for 30 time steps. We ran experiment 1 with 1,000 agents and experiments 2 and 3 with 200 agents. In experiment 1, the agents' initial position was sampled from a Gaussian distribution $\mathcal{N}((1,0),0.1)$. For experiment 2 the agents' initial position was sampled from a Gaussian $\mathcal{N}((x,0.10)$ where the mean $x$ took the values $x\in\{(-0.20,0.00),(-0.20,0.30),\\(-0.39,0.16),(-0.60,0.00),(-0.6,0.30)\}$. In experiments 1 and 2, for each initial position we performed 6 runs. For experiment 1 we set the marginal cost parameter $\eta=0$ and for experiment 2, $\eta=2$. In experiment 1 we tested for $\alpha\in\{1.0, 1.5,2.0,2.5,3.0\}$ and in experiment 2 we set $\alpha=0.1$.  
\subsection*{Assumptions}
The results within the paper are built under the  following assumptions on the instantaneous reward function $L$ and the functions $f$ and $g$.

\begin{assumption}
The function $f(x,y,\cdot)$ is Lipschitz continuous in $(x,y)$.
\end{assumption}
\begin{assumption}
The function $g(x,y)$ is Lipschitz continuous in $(x,y)$.
\end{assumption}
\begin{assumption}
 The instantaneous reward function $L(s,x,y,u)$ is H\"older-continuous in $(s,x,y)$ and convex in $u$. 
\end{assumption}
\begin{assumption}
 The instantaneous reward function $L$ is monotone in $m$ (in the sense of definition 3) and both convex and separable $u$.
\end{assumption}
\begin{assumption}
The function $L(\cdot,m):\mathbb{P}(\mathcal{H})\to\mathbb{R}$ is $C^1$ in the sense of \cite{master2}, that is there exists a continuous map $\frac{\delta L}{\delta m}$ s.t.:
\begin{equation*}
L(\cdot,m')-L(\cdot,m)=\int\int\frac{\delta L}{\delta m}(x,(1-t)m+tm')(m'-m)(dx)dt \hspace{3 mm}\forall m,m'\in\mathbb{P}(\mathcal{H}).
\end{equation*}
\end{assumption}
\begin{assumption}
 The instantaneous reward function $L$ is bounded.
\end{assumption}

\subsection*{Technical Proofs}
\subsubsection*{Proof of Theorem 1}
The following set of results are instructive for the proof of theorem 1:
 \begin{proposition}
Identical interest games are PGs.
\end{proposition}
The following result allows us to make use of proposition 1 in the context of mean field games:\begin{lemma}
Mean field games are identical interest games.\end{lemma}
\begin{proof}
The result follows immediately from the indistinguishability condition - in particular we note that the potentiality condition (c.f. in Definition 2)) is satisfied by the instantaneous function $L$ itself.\end{proof} 
To prove the theorem, we exploit directly assumption 4 from which we can deduce the existence of the quantity $\frac{\delta L}{\delta m}$. Thus, in full analogy with lemma 4.4 in \cite{MONDERER1996124}, we conclude that the game has a potential and that the equilibria of the mean field game (B) can be obtained by maximising the potential function. 

\subsubsection*{Proof of Theorem 3}
\begin{proof}
The proof of theorem 3 exploits both the boundedness properties and continuity of the functions $f$ and $J$. We build the proof in two parts the main part of which is given by the following result which we shall prove immediately: \newline 
\indent Given NE strategy profile for the game (A), $\bar{\pi}\in\Pi$ and a NE strategy profile for the mean field game (B) $\tilde{\pi}$ then the following inequalities hold:
\begin{enumerate}
\item $|f(\tilde{x},\cdot)-f(\bar{x},\cdot)|<\frac{c}{\sqrt{N}}$
\item $|J(\tilde{x},\tilde{\pi},\tilde{m}_{\tilde{x}})-J(\bar{x},\bar{\pi},\bar{m}_{x})|< c\sup_{k\in[t,T]}|\tilde{x}_k-\bar{x}_k|$
\end{enumerate}
where $c>0$ is an arbitrary constant (that may vary in each line) and as before $\tilde{m}$ and $\bar{m}$ are the distributions generated by the agents in the mean field game and the $N-$player stochastic game respectively. \newline 
\indent The first inequality bounds the difference in trajectories between the instance that the agents use the control for the mean field game and the control for the $N-$player stochastic game. The second inequality bounds the change in rewards received by the agents after perturbations in their location. Since the agents use closed-loop policies, we will make use of the result to describe changes in their policies.\newline 
\indent Let $N$ be the number of agents in the $N-$player stochastic game (A), then given $\bar{\pi}$ and $\tilde{\pi}$ as in part I, if $x^N$ and $\tilde{x}$ are solutions to the processes (1) and (7) respectively, i.e. $x^N_{k+1}=f(x^i_k,a^i_k,\epsilon_k^i)$ and $\tilde{x}_{k+1}=f(x_k,a_k,\epsilon_k)$ where $a^i_k\sim\bar{\pi}$ and $\tilde{a}_k\sim\tilde{\pi}$.\newline \indent To prove $ \|x^N_k-\tilde{x}_k\|\leq \frac{c_k}{\sqrt{N}}$ we firstly must modify the representation of the function $f$ as a function purely in terms of the spatial variables.\newline
\indent By \cite{Xu2015}, the policy $\pi$ can be expressed by a function of the form $\Gamma:\mathcal{S}_i\times\mathcal{S}_{-i}\to\mathbb{R}$ s.t.: $\pi=\Gamma(x^i,x^{-i})$, moreover, the function $\Gamma$ is Lipschitz continuous in each variable.\newline
\indent We note that for any given pair $x_k^i\in\mathcal{S}^i,x_k^{-i}\in\mathcal{S}^{-i}$ we can therefore express the transition function $f$ in the following way $\forall k\leq T$:
\begin{equation}
f(x_k^i,a^i_k,\zeta_k)\equiv\Theta(x_k^i,x^{-i}_k,\zeta_k), \hspace{4 mm} a^i_k\sim \pi\tag{A\theequation}\label{fistheta}\stepcounter{equation}\end{equation}
where $\{\zeta_k\}$ is a set of i.i.d. random variables and the function $\Theta$ is a bounded, Lipschitz continuous function. \newline \indent By the same reasoning (and recalling (6) - (7)), given a NE policy for the discrete MFG (B), $\tilde{\pi}\in \Pi$ and the induced mean field distribution $m_{\tilde{x}}\in\mathbb{P}(\mathcal{H})$,  we can similarly express the transition dynamics for the discrete-time MFG in terms of the function $\Theta$ as $f(\tilde{x}_k,a_k,\zeta_k)\equiv\Theta(\tilde{x}_k,m_{\tilde{x}_k},\zeta_k)$ where $a_k\sim\tilde{\pi}$.
Moreover, given the boundedness of the function $\Theta$, we also make the following observation which bounds the variance taken w.r.t $x\in\mathcal{S}$, $\forall k\leq T$:
\begin{equation}
\sum_{l=1}^d {\rm var}\Bigg(\frac{1}{n}\sum_{j\neq i}^N\Theta(x^i_k,x^{-i}_k,\cdot)\Bigg)\leq \frac{k}{n}(1+\max_{l}\|\Theta_l\|_\infty)^2\leq \frac{c_k}{N},\tag{A\theequation}\label{variancebound}\stepcounter{equation}
\end{equation}
where $\|f_l\|$ denotes the bound on the $l^{th}$ component of $\Theta$.   
\newline 
\indent We secondly note that using (\ref{fistheta}) and by the state transition equations (1) and (7) we note that we can express the difference in trajectories as:
\begin{equation*}
\|x^N_k-\tilde{x}_k\|=\mathbb{E}\Big[\Big\|\frac{1}{N}\sum_{j\neq i}^N\Theta(\bar{x}_k^i,\bar{x}^j_k,\zeta)-\int_{\mathcal{S}}\Theta(\tilde{x}_k,x',\zeta)m(dx')\Big\|\Big].
\end{equation*}
We now prove the statement by induction on the time index $k\in\mathbb{N}$. 
\newline
\indent In the base case, we seek to prove the following bound:
\begin{equation*}
\|x^N_1-\tilde{x}_1\|\leq \frac{c_k}{\sqrt{N}},
\end{equation*}
which is equivalent to the following inequality:
\begin{equation*}
\mathbb{E}\Big[\Big\|\frac{1}{N}\sum_{j\neq i}^N\Theta(x_0,x^j_0,\zeta)-\int_{\mathcal{S}}\Theta(x_0,x',\cdot)m(dx')\Big\|\Big]\leq \frac{c_k}{\sqrt{N}}    
\end{equation*}
This follows straightforwardly since 
\begin{gather*}
\mathbb{E}\Big[\Big\|\frac{1}{N}\sum_{j\neq i}^N\Theta(x_0,x^j_0,\zeta)-\int_{\mathcal{S}}\Theta(x_0,x',\cdot)m(dx')\Big\|^2\Big]\nonumber\\\leq
\sum_{l=1}^d{\rm var}\Bigg(\frac{1}{n}\sum_{j\neq i}^N\Theta(x_0,x_{-i},\cdot)\Bigg)\leq \frac{c_k}{N},
\end{gather*}
using the variance bound (\ref{variancebound}) by reordering the expression, after applying the Cauchy-Schwarz inequality we arrive at the required result.\newline 
\indent For the general case, we firstly make the following inductive hypothesis:
\begin{equation*}
\|x^N_k-\tilde{x}_k\|\leq \frac{c_k}{\sqrt{N}},
\end{equation*}
We therefore seek to show that the following bound is satisfied:
\begin{equation*}
\|x^N_{k+1}-\tilde{x}_{k+1}\|\leq \frac{c_k}{\sqrt{N}},
\end{equation*}
which is equivalent to:
\begin{equation}
\mathbb{E}\Big[\Big\|\frac{1}{N}\sum_{j\neq i}^N\Theta(\bar{x}_k^i,\bar{x}^j_k,\zeta)-\int_{\mathcal{S}}\Theta(\tilde{x}_k,x',\zeta)m(dx')\Big\|\Big]\leq \frac{c_k}{\sqrt{N}}. \tag{A\theequation}\label{indhyp2}\stepcounter{equation}   
\end{equation}
To achieve this, we consider the term:
\begin{equation*}
\mathbb{E}\Big[\Big\|\frac{1}{N}\sum_{j\neq i}^N\Theta(x_k^i,x^j_k,\zeta)-\int_{\mathcal{S}}\Theta(x_k^i,x',\zeta)m(dx')\Big\|\Big].\end{equation*}
Now by the triangle inequality and the Lipschitz continuity of $\Theta$ we have that:
\begin{gather}
\mathbb{E}\Big[\Big\|\frac{1}{N}\sum_{j\neq i}^N\Theta(x_k^i,x^j_k,\zeta)-\frac{1}{N}\sum_{j\neq i}^N\Theta(\bar{x}_k^i,x^j_k,\zeta)\Big\|\Big]\nonumber\\+\mathbb{E}\Big[\Big\|\frac{1}{N}\sum_{j\neq i}^N\Theta(\bar{x}_k^i,x^j_k,\zeta)-\int_{\mathcal{S}}\Theta(\bar{x}_k^i,\bar{x}^j_k,\zeta)\Big\|\Big]\nonumber+\mathbb{E}\Big[\Big\|\frac{1}{N}\sum_{j\neq i}^N\Theta(\bar{x}_k^i,\bar{x}^j_k,\zeta)-\int_{\mathcal{S}}\Theta(\tilde{x}_k,x',\zeta)m(dx')\Big\|\Big]\nonumber
\\
\leq c_1\|x^i_k-\bar{x}^i_k\|+c_2\frac{1}{N}\sum_{j\neq i}^N\|x^j_k-\bar{x}_k^j\|+\mathbb{E}\Big[\Big\|\frac{1}{N}\sum_{j\neq i}^N\Theta(\bar{x}_k^i,\bar{x}^j_k,\zeta)-\int_{\mathcal{S}}\Theta(\tilde{x}_k,x',\zeta)m(dx')\Big\|\Big],\nonumber
\end{gather}
where $c_1,c_2>0$ are arbitrary (Lipschitz) constants. After  summing over $i$ and dividing by $N$ and, using (\ref{indhyp2})  we deduce the required result.\newline
\indent For part (ii) we exploit the Lipschitzianity of the function $L$. Moreover, since the policy $\pi$ can be expressed as $\pi=\Gamma(x^i,x^{-i})$ we express the instantaneous function $L$ as $\hat{L}:\mathcal{S}^i\times\mathcal{S}^{-i}$ using \cite{Xu2015}.\newline
\indent We now observe the following estimate:
\begin{gather*}
\mathbb{E}\Big[\Big\|\frac{1}{N}\sum_{j\neq i}^N\hat{L}(\bar{x}_k^i,x^{i}_k)-\int_{\mathcal{S}}\hat{L}(\tilde{x}_k,x')m(dx')\Big\|\Big]
\nonumber\\\leq
c_1\|x^i_k-\bar{x}^i_k\|+c_2\frac{1}{N}\sum_{j\neq i}^N\|x^j_k-\bar{x}_k^j\|+\Bigg[\sum_{l=1}^d{\rm var}\Bigg(\frac{1}{n}\sum_{j\neq i}^N\Theta(x^i_k,x^{-i}_k,\cdot)\Bigg)\Bigg]^{\frac{1}{2}}.
\end{gather*}
where $c_1,c_2>0$ are arbitrary constants. \newline \indent As in part (i), after summing in $i$ and dividing by $N$ and, using the result in part (i) we deduce the result.
\end{proof}

\subsubsection*{Proof of Theorem 2}
\begin{proof}
The theorem is proved by demonstrating that given some initial belief $\tilde{m}_0$ of the distribution $m$, each agent can generate an iterative sequence $\{\tilde{m}^n\}_{\{n\in\mathbb{N}\}}$ s.th. $\{\tilde{m}^n\}_{\{n\in\mathbb{N}\}}\to m$ as $n\to\infty$. Thereafter, we deploy a \emph{two-timescales method} \cite{borkar,Leslie2006} so that the the beliefs of the distribution (at a given state) are updated slowly and thus are quasi-static from the perspective of the updating procedure for the policy. In particular, in the following, we show that provided the updating procedure to the agent's belief of the distribution $\tilde{m}$ are performed at a sufficiently pace relative to the updates to the policy, the problem is reduced to an optimisation problem for each agent where at each stage of the iteration each agent plays an approximate best-response given some belief of the distribution of the agents (evaluated at the agent's own location). The agents improve their policies using a stochastic gradient procedure which, with an update to the distribution that appears static, produces enough traction to enable convergence to the optimal policy.     

Consider first, each agent's optimal value function which, given a belief of the distribution $\tilde{m}$ is given by the following:
\begin{equation}
v^{\pi,\tilde{m}}(x)\triangleq\sup_{\pi'\in \Pi} J[x,\pi',\tilde{m}_{x}],    \label{payofffunctionmfg belief}
\end{equation}
Suppose also that each agent has a belief $\tilde{m}^n$ of the distribution $\tilde{m}$ over $n=0,1,2,\ldots$, iterations which are updated using a procedure that we will later specify (this can in fact be viewed as a parameter of the value function at iteration $n$), then following (\ref{payofffunctionmfg belief}) we can define the $n^{th}$ iterate of $v$ by the following expression:
\begin{equation}
v^{\pi,\tilde{m}}_n(x)\triangleq\sup_{\pi'\in \Pi} J[x,\pi',\tilde{m}^n_{x}],   \qquad \forall x\in\mathcal{S} \label{payofffunctionmfg kth belief}
\end{equation}
Using the boundedness of $L$ we observe that $\underset{n}{\sup}|v_n^{\pi,\tilde{m}}(x)|=\underset{n,\pi'\in \Pi}{\sup} J[x,\pi',\tilde{m}^n_{x}]\leq \sum_{k=t}^T \|L\|_{\infty}<\infty$, so that the sequence $\{v^{\pi,\tilde{m}}_n(x)\}_{n\geq 1}$ consists of bounded terms.

Let us analogously define the $n^{th}$ \emph{fictitious best-response} as:
\begin{equation}
\tilde{\pi}^n\in\arg\hspace{-0.45 mm}\sup_{\hspace{-2.3 mm}\pi'\in \Pi} v^{\pi',\tilde{m}^n}_n(x),\qquad \forall x\in\mathcal{S}    \label{kth response}
\end{equation}
Suppose that each iteration each agent updates their belief over the distribution via the procedure:
\begin{align}
\hspace{9 mm} \tilde{m}^{n+1}_x=\tilde{m}^{n}_x+ \beta(n)[B(a^n,\tilde{m}^n_x)+M_{n+1}^{(\tilde{m})}],\qquad n=0,1,\ldots, \qquad \forall x\in\mathcal{S},\quad a^n\sim v_n \label{field update}
\end{align}
where $\beta(n)$ is a positive step-size function s.th. $\sum_n\beta(n)=\infty$ and $\sum_n\beta(n)^2<\infty$, $B:\mathcal{A}\times\mathbb{R}\to \mathbb{R}$ is some well-behaved function (recall that $m_x\in\mathbb{R}^d$ is the agent density evaluated at the point $\in\mathcal{S}\subset \mathbb{R}^d$) and $M^{(\tilde{m})}_{n}$ is a martingale difference sequence\footnote{W.r.t. the increasing family of $\sigma-$ algebras $\sigma(x_l,M^{(\tilde{m})}_l,l\leq n)$.}. In particular, we assume that $B$ is Lipschitz continuous in both variables and satisfies a growth condition i.e $\| B(x,y)\|\leq c(1+\| (x,y)\|)$ for any $(x,y)\in \mathcal{A}\times\mathbb{P}(\mathcal{H})$ and $\| B(x,y)-B(w,z)\|\leq d\|(x,y)-(w,z)\|$ for any $(w,z),(x,y)\in \mathcal{A}\times\mathbb{P}(\mathcal{H})$ for some constants $c,d>0$.

We note that since the update can be viewed as an updating procedure over a parameterisation of the value function, the update in (\ref{field update}) can be viewed as a critic update of the value function $v^{\tilde{m}}_n$ (parameterised by $\tilde{m}_1,\tilde{m}_2,\ldots$) within the context of a mean field game. It can be straightforwardly shown that $\sum_n\beta(n)M_{n}^{(\tilde{m})}<\infty$ (see for example, Appendix C in \cite{borkar}).

Since the action $a^n$ is sampled from the policy $\pi^n$ and, using (\ref{kth response}), we can rewrite (\ref{field update}) as the following expression:
\begin{equation}
\qquad\qquad\qquad \tilde{m}^{n+1}_x=\tilde{m}^{n}_x+ \beta(n)[B'(v_n,\tilde{m}^n)+M_{n+1}^{(\tilde{m})}],\qquad n=0,1,\ldots \quad \qquad \forall x\in\mathcal{S}   \label{field update 2}
\end{equation}
for some map $B'$ which maps to and from the same spaces and satisfies the same assumptions as $B$ (we have suppressed the super-indices on the function $v$).

Our goal is to show that the following expression holds $\forall x\in\mathcal{S}$:
\begin{equation}
\lim_{n\to\infty}v^{\tilde{\pi}^n,\tilde{m}^n}_n(x)=v^{\lambda(m_x),m_x}\qquad a.s.
\end{equation}
where $(\lambda(m),m)$ is a fixed point solution to the MFG system (B) (i.e. $\lambda(m)\in\sup_{\pi'\in\Pi}v^{\pi'(m_x),m_x}(x)$).

Let us now consider the following update procedure for $v_n$:
\begin{equation}
v_{n+1}(x)=v_{n}(x)+ \alpha(n)[A(\pi^n,\tilde{m}^n_x)+{M}_{n+1}^{(\pi)}],\qquad n=0,1,\ldots\quad \forall x\in\mathcal{S}    \label{policy update slow}
\end{equation}
where $M^{(\pi)}_{n}$ is a martingale difference sequence and $A:\Pi\times\mathbb{R}\to \mathbb{R}$ is a Lipschitz continuous function  that satisfies a growth condition in both variables and where $\alpha$ is a positive step-size function s.th. $\sum_n\alpha(n)=\infty$ and $\sum_n\alpha(n)^2<\infty$ and is chosen such that  $\alpha(n)\backslash \beta(n) \sim \mathcal{O}(\frac{1}{n^p})$ for some $p>1$ so that the belief $\tilde{m}^n$ is updated slowly relative to $\pi$ and $\alpha(n)\backslash \beta(n)\to 0$ as $n\to \infty$. In a similar way it can be shown that $\sum_n\alpha(n)M_{n}^{(\pi)}<\infty$. 

We can now rewrite the update process over $\tilde{m}$ as the following:
\begin{equation}
\tilde{m}^{n+1}_x=\tilde{m}^{n}_x+ \alpha(n)[\hat{B}'(v_n,\tilde{m}^n_x)+\hat{M}_{n+1}^{(\tilde{m})}],\qquad n=0,1,\ldots    \label{field update slow}
\end{equation}
where $\hat{B}'(v_n,\tilde{m}^n_x)\triangleq\frac{\beta (n)}{\alpha(n)}B'(v_n,\tilde{m}^n_x)$ and $\hat{M}_{n+1}^{(\tilde{m})}\triangleq\frac{\beta (n)}{\alpha(n)}M_{n+1}^{(\tilde{m})}$. We note that the sequence $\{\tilde{m}^{n}_x\}_{n\geq 1}$ is quasi-static w.r.t. the sequence $\{v_n\}_{n\geq 1}$, moreover the sequence $\{v_n\}_{n\geq 1}$ converges when $\tilde{m}^n_x$ is fixed at a particular $n\in\mathbb{N}$ (the result follows since it can be shown that $v_n$ strictly increases whenever $v_n$ is suboptimal).

Let us firstly recall that $\underset{n}{\sup}|v_n^{\pi,\tilde{m}}(x)|< \infty$ and $\sum_n\alpha(n)M^{(\pi)}_{n}<\infty$. Analogously, we have that $\sum_n\alpha(n)\hat{M}_{n}^{(\tilde{m})}=\sum_n\beta(n)M^{(\tilde{m})}_{n}<\infty$ moreover, since $\tilde{m}^n_x$ is defined over a bounded domain and, using Assumption 2 we deduce that $\underset{n}{\sup}|\tilde{m}^n_x|< \infty$. We can therefore apply theorem 2, ch. 6 in \cite{borkar} from which find that $\lim_{n\to\infty}(v_n^{\tilde{\pi}^n,\tilde{m}},\tilde{m}^n)\to (v^{\lambda(m),m},m)$ a.s. after which, using (\ref{kth response}), we deduce the thesis.  

\end{proof}

%

\begin{thebibliography}{}

\bibitem[\protect\citeauthoryear{Ahmad \bgroup et al\mbox.\egroup
  }{2010}]{ahmad2010spectrum}
Ahmad, S.; Tekin, C.; Liu, M.; Southwell, R.; and Huang, J.
\newblock 2010.
\newblock Spectrum sharing as spatial congestion games.
\newblock {\em Preprint arXiv:1011.5384}.

\bibitem[\protect\citeauthoryear{Bardi}{2012}]{bardi}
Bardi, M.
\newblock 2012.
\newblock {Explicit solutions of some Linear-Quadratic Mean Field Games}.
\newblock {\em {Networks and Heterogeneous Media}} 7(2):243 -- 261.

\bibitem[\protect\citeauthoryear{Borkar}{2008}]{borkar}
Borkar, V.~S.
\newblock 2008.
\newblock {\em Stochastic Approximation: A Dynamical Systems Viewpoint}.
\newblock Hindustan Book Agency.

\bibitem[\protect\citeauthoryear{Brown}{1951}]{brown1951}
Brown, G.~W.
\newblock 1951.
\newblock Iterative solution of games by fictitious play.
\newblock {\em Activity analysis of production and allocation} 13(1):374 --
  376.

\bibitem[\protect\citeauthoryear{Busoniu, Babuska, and
  De~Schutter}{2008}]{busoniu2008comprehensive}
Busoniu, L.; Babuska, R.; and De~Schutter, B.
\newblock 2008.
\newblock A comprehensive survey of multiagent reinforcement learning.
\newblock {\em IEEE Transactions on Systems, Man, and Cybernetics-Part C:
  Applications and Reviews, 38 (2), 2008}.

\bibitem[\protect\citeauthoryear{Cardaliaguet and
  Hadikhanloo}{2017}]{cardaliaguet}
Cardaliaguet, P., and Hadikhanloo, S.
\newblock 2017.
\newblock Learning in mean field games: The fictitious play.
\newblock {\em ESAIM: Control, Optimisation and Calculus of Variations}
  23(2):569--591.

\bibitem[\protect\citeauthoryear{Cardaliaguet \bgroup et al\mbox.\egroup
  }{2015}]{master2}
Cardaliaguet, P.; Delarue, F.; Lasry, J.-M.; and Lions, P.-L.
\newblock 2015.
\newblock The master equation and the convergence problem in mean field games.
\newblock {\em Preprint arXiv:1509.02505}.

\bibitem[\protect\citeauthoryear{de Finetti}{1931}]{FIN1}
de~Finetti, B.
\newblock 1931.
\newblock Funzione caratteristica di un fenomeno aleatorio.
\newblock In {\em Atti della R. Accademia Nazionale dei Lincei}, volume~4,  251
  -- 299.

\bibitem[\protect\citeauthoryear{Devlin \bgroup et al\mbox.\egroup
  }{2014}]{devlin}
Devlin, S.; Yliniemi, L.; Kudenko, D.; and Turner, K.
\newblock 2014.
\newblock Potential-based difference rewards for multiagent reinforcement
  learning.
\newblock In {\em Int. Conf. on Autonomous Agents and Multiagent Systems
  (AAMAS)}, volume~1,  165--172.

\bibitem[\protect\citeauthoryear{Gomes, Mohr, and Souza}{2010}]{GOMES2010}
Gomes, D.~A.; Mohr, J.; and Souza, R.~R.
\newblock 2010.
\newblock Discrete time, finite state space mean field games.
\newblock {\em Journal de Mathématiques Pures et Appliquées} 93(3):308 --
  328.

\bibitem[\protect\citeauthoryear{Hu and Wellman}{2004}]{Well1}
Hu, J., and Wellman, M.
\newblock 2004.
\newblock Nash {Q}-learning for general-sum stochastic games.
\newblock {\em Journal of Machine Learning Research} 4(6):1039 -- 1069.

\bibitem[\protect\citeauthoryear{Kash, Friedman, and Halpern}{2011}]{kash}
Kash, I.~A.; Friedman, E.~J.; and Halpern, J.~Y.
\newblock 2011.
\newblock Multiagent learning in large anonymous games.
\newblock {\em Journal of Artificial Intelligence Research} 40:571 -- 598.

\bibitem[\protect\citeauthoryear{Lasry and Lions}{2007}]{lasry}
Lasry, J.-M., and Lions, P.-L.
\newblock 2007.
\newblock Mean field games.
\newblock {\em Japanese Journal of Mathematics} 2(1):229 -- 260.

\bibitem[\protect\citeauthoryear{Leibo \bgroup et al\mbox.\egroup
  }{2017}]{deep2017}
Leibo, J.~Z.; Zambaldi, V.; Lanctot, M.; Marecki, J.; and Graepel, T.
\newblock 2017.
\newblock Multi-agent reinforcement learning in sequential social dilemmas.
\newblock In {\em Proc. Conf. on Autonomous Agents and MultiAgent Systems
  (AAMAS)},  464 -- 473.

\bibitem[\protect\citeauthoryear{Leslie and Collins}{2006}]{Leslie2006}
Leslie, D.~S., and Collins, E.~J.
\newblock 2006.
\newblock Generalised weakened fictitious play.
\newblock {\em Games and Economic Behavior} 56:285 -- 298.

\bibitem[\protect\citeauthoryear{Littman}{1994}]{minimaxQ}
Littman, M.~L.
\newblock 1994.
\newblock Markov games as a framework for multi-agent reinforcement learning.
\newblock In {\em Proc. Int. Conf. on Machine Learning (ICML)}, volume 157,
  157 -- 163.

\bibitem[\protect\citeauthoryear{Littman}{2001}]{Litt1}
Littman, M.~L.
\newblock 2001.
\newblock Friend-or-foe {Q}-learning in general-sum games.
\newblock In {\em Proc. Int. Conf. on Machine Learning (ICML)}, volume~1,  322
  -- 328.

\bibitem[\protect\citeauthoryear{Monderer and Shapley}{1996}]{MONDERER1996124}
Monderer, D., and Shapley, L.~S.
\newblock 1996.
\newblock Potential games.
\newblock {\em Games and Economic Behavior} 14(1):124 -- 143.

\bibitem[\protect\citeauthoryear{Shoham and Leyton-Brown}{2008}]{shoham}
Shoham, Y., and Leyton-Brown, K.
\newblock 2008.
\newblock {\em Multiagent Systems: Algorithmic, Game-Theoretic, and Logical
  Foundations}.
\newblock Cambridge University Press.

\bibitem[\protect\citeauthoryear{Sutton and Barto}{1998}]{sutton}
Sutton, R.~S., and Barto, A.~G.
\newblock 1998.
\newblock {\em Reinforcement Learning: An Introduction}, volume~1.
\newblock MIT press Cambridge.

\bibitem[\protect\citeauthoryear{Tuyls and Weiss}{2012}]{tuyls}
Tuyls, K., and Weiss, G.
\newblock 2012.
\newblock Multiagent learning: Basics, challenges, and prospects.
\newblock {\em {AI} Magazine} 33(3):41 -- 52.

\bibitem[\protect\citeauthoryear{Xu}{2007}]{Xu2015}
Xu, Y.
\newblock 2007.
\newblock Uniformly {Lipschitz} feedback optimal controls in a linear-quadratic
  framework.
\newblock {\em Journal of Mathematical Analysis and Applications} 327(1):665 --
  678.

\end{thebibliography}


\begin{thebibliography}{}

\bibitem[\protect\citeauthoryear{Borkar}{2008}]{borkar}
Borkar, V.~S.
\newblock 2008.
\newblock {\em Stochastic Approximation: A Dynamical Systems Viewpoint}.
\newblock Hindustan Book Agency.

\bibitem[\protect\citeauthoryear{Cardaliaguet \bgroup et al\mbox.\egroup
  }{2015}]{master2}
Cardaliaguet, P.; Delarue, F.; Lasry, J.-M.; and Lions, P.-L.
\newblock 2015.
\newblock The master equation and the convergence problem in mean field games.
\newblock {\em Preprint arXiv:1509.02505}.

\bibitem[\protect\citeauthoryear{Leslie and Collins}{2006}]{Leslie2006}
Leslie, D.~S., and Collins, E.~J.
\newblock 2006.
\newblock Generalised weakened fictitious play.
\newblock {\em Games and Economic Behavior} 56:285 -- 298.

\bibitem[\protect\citeauthoryear{Monderer and Shapley}{1996}]{MONDERER1996124}
Monderer, D., and Shapley, L.~S.
\newblock 1996.
\newblock Potential games.
\newblock {\em Games and Economic Behavior} 14(1):124 -- 143.

\bibitem[\protect\citeauthoryear{Xu}{2007}]{Xu2015}
Xu, Y.
\newblock 2007.
\newblock Uniformly {Lipschitz} feedback optimal controls in a linear-quadratic
  framework.
\newblock {\em Journal of Mathematical Analysis and Applications} 327(1):665 --
  678.

\end{thebibliography}

\end{document}


\maketitle{}

\section*{Experimental Setup}
$\indent$In all three experiments, all learning was performed in TensorFlow using actor-critic method with the Adam optimiser. The actor is represented by a two-layer fully connected network represented by a neural network with a Gaussian output with variance 0.1. The critic is a two-layer fully connected neural network. The experimental parameters are as follows:
the learning rates in experiments 1, 2 and 3 were set to $10^{-4}$,  $10^{-3}$ and $10^{-3}$ respectively. In all experiments we used a discount factor of 0.99. Experiment 1 is a one-shot game, experiment 2 was run for 30 time steps. We ran experiment 1 with 1,000 agents and experiments 2 and 3 with 200 agents. In experiment 1, the agents' initial position was sampled from a Gaussian distribution $\mathcal{N}((1,0),0.1)$. For experiment 2 the agents' initial position was sampled from a Gaussian $\mathcal{N}((x,0.10)$ where the mean $x$ took the values $x\in\{(-0.20,0.00),(-0.20,0.30),\\(-0.39,0.16),(-0.60,0.00),(-0.6,0.30)\}$. In experiments 1 and 2, for each initial position we performed 6 runs. For experiment 1 we set the marginal cost parameter $\eta=0$ and for experiment 2, $\eta=2$. In experiment 1 we tested for $\alpha\in\{1.0, 1.5,2.0,2.5,3.0\}$ and in experiment 2 we set $\alpha=0.1$.  
\section*{Assumptions}
The results within the paper are built under the  following assumptions on the instantaneous reward function $L$ and the functions $f$ and $g$.

\begin{assumption}
The function $f(x,y,\cdot)$ is Lipschitz continuous in $(x,y)$.
\end{assumption}
%
\begin{assumption}
The function $g(x,y)$ is Lipschitz continuous in $(x,y)$.
\end{assumption}
%
\begin{assumption}
 The instantaneous reward function $L(s,x,y,u)$ is H\"older-continuous in $(s,x,y)$ and convex in $u$. 
\end{assumption}
%
\begin{assumption}
 The instantaneous reward function $L$ is monotone in $m$ (in the sense of definition 3) and both convex and separable $u$.
\end{assumption}
%
\begin{assumption}
The function $L(\cdot,m):\mathbb{P}(\mathcal{H})\to\mathbb{R}$ is $C^1$ in the sense of \cite{master2}, that is there exists a continuous map $\frac{\delta L}{\delta m}$ s.t.:
\begin{equation*}
L(\cdot,m')-L(\cdot,m)=\int\int\frac{\delta L}{\delta m}(x,(1-t)m+tm')(m'-m)(dx)dt \hspace{3 mm}\forall m,m'\in\mathbb{P}(\mathcal{H}).
\end{equation*}
\end{assumption}
%
\begin{assumption}
 The instantaneous reward function $L$ is bounded.
\end{assumption}
%

\section*{Technical Proofs}
\subsection*{Proof of Theorem 1}
The following set of results are instructive for the proof of theorem 1:
 \begin{proposition}
Identical interest games are PGs.
\end{proposition}
The following result allows us to make use of proposition 1 in the context of mean field games:\begin{lemma}
Mean field games are identical interest games.\end{lemma}
\begin{proof}
The result follows immediately from the indistinguishability condition - in particular we note that the potentiality condition (c.f. in Definition 2)) is satisfied by the instantaneous function $L$ itself.\end{proof} 
To prove the theorem, we exploit directly assumption 4 from which we can deduce the existence of the quantity $\frac{\delta L}{\delta m}$. Thus, in full analogy with lemma 4.4 in \cite{MONDERER1996124}, we conclude that the game has a potential and that the equilibria of the mean field game (B) can be obtained by maximising the potential function. 

\subsection*{Proof of Theorem 3}
\begin{proof}
The proof of theorem 3 exploits both the boundedness properties and continuity of the functions $f$ and $J$. We build the proof in two parts the main part of which is given by the following result which we shall prove immediately: \newline 
\indent Given NE strategy profile for the game (A), $\bar{\pi}\in\Pi$ and a NE strategy profile for the mean field game (B) $\tilde{\pi}$ then the following inequalities hold:
\begin{enumerate}
\item $|f(\tilde{x},\cdot)-f(\bar{x},\cdot)|<\frac{c}{\sqrt{N}}$
\item $|J(\tilde{x},\tilde{\pi},\tilde{m}_{\tilde{x}})-J(\bar{x},\bar{\pi},\bar{m}_{x})|< c\sup_{k\in[t,T]}|\tilde{x}_k-\bar{x}_k|$
\end{enumerate}
where $c>0$ is an arbitrary constant (that may vary in each line) and as before $\tilde{m}$ and $\bar{m}$ are the distributions generated by the agents in the mean field game and the $N-$player stochastic game respectively. \newline 
\indent The first inequality bounds the difference in trajectories between the instance that the agents use the control for the mean field game and the control for the $N-$player stochastic game. The second inequality bounds the change in rewards received by the agents after perturbations in their location. Since the agents use closed-loop policies, we will make use of the result to describe changes in their policies.\newline 
\indent Let $N$ be the number of agents in the $N-$player stochastic game (A), then given $\bar{\pi}$ and $\tilde{\pi}$ as in part I, if $x^N$ and $\tilde{x}$ are solutions to the processes (1) and (7) respectively, i.e. $x^N_{k+1}=f(x^i_k,a^i_k,\epsilon_k^i)$ and $\tilde{x}_{k+1}=f(x_k,a_k,\epsilon_k)$ where $a^i_k\sim\bar{\pi}$ and $\tilde{a}_k\sim\tilde{\pi}$.\newline \indent To prove $ \|x^N_k-\tilde{x}_k\|\leq \frac{c_k}{\sqrt{N}}$ we firstly must modify the representation of the function $f$ as a function purely in terms of the spatial variables.\newline
\indent By \cite{Xu2015}, the policy $\pi$ can be expressed by a function of the form $\Gamma:\mathcal{S}_i\times\mathcal{S}_{-i}\to\mathbb{R}$ s.t.: $\pi=\Gamma(x^i,x^{-i})$, moreover, the function $\Gamma$ is Lipschitz continuous in each variable.\newline
\indent We note that for any given pair $x_k^i\in\mathcal{S}^i,x_k^{-i}\in\mathcal{S}^{-i}$ we can therefore express the transition function $f$ in the following way $\forall k\leq T$:
\begin{equation}
f(x_k^i,a^i_k,\zeta_k)\equiv\Theta(x_k^i,x^{-i}_k,\zeta_k), \hspace{4 mm} a^i_k\sim \pi\tag{A\theequation}\label{fistheta}\stepcounter{equation}\end{equation}
where $\{\zeta_k\}$ is a set of i.i.d. random variables and the function $\Theta$ is a bounded, Lipschitz continuous function. \newline \indent By the same reasoning (and recalling (6) - (7)), given a NE policy for the discrete MFG (B), $\tilde{\pi}\in \Pi$ and the induced mean field distribution $m_{\tilde{x}}\in\mathbb{P}(\mathcal{H})$,  we can similarly express the transition dynamics for the discrete-time MFG in terms of the function $\Theta$ as $f(\tilde{x}_k,a_k,\zeta_k)\equiv\Theta(\tilde{x}_k,m_{\tilde{x}_k},\zeta_k)$ where $a_k\sim\tilde{\pi}$.
Moreover, given the boundedness of the function $\Theta$, we also make the following observation which bounds the variance taken w.r.t $x\in\mathcal{S}$, $\forall k\leq T$:
\begin{equation}
\sum_{l=1}^d {\rm var}\Bigg(\frac{1}{n}\sum_{j\neq i}^N\Theta(x^i_k,x^{-i}_k,\cdot)\Bigg)\leq \frac{k}{n}(1+\max_{l}\|\Theta_l\|_\infty)^2\leq \frac{c_k}{N},\tag{A\theequation}\label{variancebound}\stepcounter{equation}
\end{equation}
where $\|f_l\|$ denotes the bound on the $l^{th}$ component of $\Theta$.   
\newline 
\indent We secondly note that using (\ref{fistheta}) and by the state transition equations (1) and (7) we note that we can express the difference in trajectories as:
\begin{equation*}
\|x^N_k-\tilde{x}_k\|=\mathbb{E}\Big[\Big\|\frac{1}{N}\sum_{j\neq i}^N\Theta(\bar{x}_k^i,\bar{x}^j_k,\zeta)-\int_{\mathcal{S}}\Theta(\tilde{x}_k,x',\zeta)m(dx')\Big\|\Big].
\end{equation*}
We now prove the statement by induction on the time index $k\in\mathbb{N}$. 
\newline
\indent In the base case, we seek to prove the following bound:
\begin{equation*}
\|x^N_1-\tilde{x}_1\|\leq \frac{c_k}{\sqrt{N}},
\end{equation*}
which is equivalent to the following inequality:
\begin{equation*}
\mathbb{E}\Big[\Big\|\frac{1}{N}\sum_{j\neq i}^N\Theta(x_0,x^j_0,\zeta)-\int_{\mathcal{S}}\Theta(x_0,x',\cdot)m(dx')\Big\|\Big]\leq \frac{c_k}{\sqrt{N}}    
\end{equation*}
This follows straightforwardly since 
\begin{gather*}
\mathbb{E}\Big[\Big\|\frac{1}{N}\sum_{j\neq i}^N\Theta(x_0,x^j_0,\zeta)-\int_{\mathcal{S}}\Theta(x_0,x',\cdot)m(dx')\Big\|^2\Big]\nonumber\\\leq
\sum_{l=1}^d{\rm var}\Bigg(\frac{1}{n}\sum_{j\neq i}^N\Theta(x_0,x_{-i},\cdot)\Bigg)\leq \frac{c_k}{N},
\end{gather*}
using the variance bound (\ref{variancebound}) by reordering the expression, after applying the Cauchy-Schwarz inequality we arrive at the required result.\newline 
\indent For the general case, we firstly make the following inductive hypothesis:
\begin{equation*}
\|x^N_k-\tilde{x}_k\|\leq \frac{c_k}{\sqrt{N}},
\end{equation*}
We therefore seek to show that the following bound is satisfied:
\begin{equation*}
\|x^N_{k+1}-\tilde{x}_{k+1}\|\leq \frac{c_k}{\sqrt{N}},
\end{equation*}
which is equivalent to:
\begin{equation}
\mathbb{E}\Big[\Big\|\frac{1}{N}\sum_{j\neq i}^N\Theta(\bar{x}_k^i,\bar{x}^j_k,\zeta)-\int_{\mathcal{S}}\Theta(\tilde{x}_k,x',\zeta)m(dx')\Big\|\Big]\leq \frac{c_k}{\sqrt{N}}. \tag{A\theequation}\label{indhyp2}\stepcounter{equation}   
\end{equation}
To achieve this, we consider the term:
\begin{equation*}
\mathbb{E}\Big[\Big\|\frac{1}{N}\sum_{j\neq i}^N\Theta(x_k^i,x^j_k,\zeta)-\int_{\mathcal{S}}\Theta(x_k^i,x',\zeta)m(dx')\Big\|\Big].\end{equation*}
Now by the triangle inequality and the Lipschitz continuity of $\Theta$ we have that:
\begin{gather}
\mathbb{E}\Big[\Big\|\frac{1}{N}\sum_{j\neq i}^N\Theta(x_k^i,x^j_k,\zeta)-\frac{1}{N}\sum_{j\neq i}^N\Theta(\bar{x}_k^i,x^j_k,\zeta)\Big\|\Big]\nonumber\\+\mathbb{E}\Big[\Big\|\frac{1}{N}\sum_{j\neq i}^N\Theta(\bar{x}_k^i,x^j_k,\zeta)-\int_{\mathcal{S}}\Theta(\bar{x}_k^i,\bar{x}^j_k,\zeta)\Big\|\Big]\nonumber+\mathbb{E}\Big[\Big\|\frac{1}{N}\sum_{j\neq i}^N\Theta(\bar{x}_k^i,\bar{x}^j_k,\zeta)-\int_{\mathcal{S}}\Theta(\tilde{x}_k,x',\zeta)m(dx')\Big\|\Big]\nonumber
\\
\leq c_1\|x^i_k-\bar{x}^i_k\|+c_2\frac{1}{N}\sum_{j\neq i}^N\|x^j_k-\bar{x}_k^j\|+\mathbb{E}\Big[\Big\|\frac{1}{N}\sum_{j\neq i}^N\Theta(\bar{x}_k^i,\bar{x}^j_k,\zeta)-\int_{\mathcal{S}}\Theta(\tilde{x}_k,x',\zeta)m(dx')\Big\|\Big],\nonumber
\end{gather}
where $c_1,c_2>0$ are arbitrary (Lipschitz) constants. After  summing over $i$ and dividing by $N$ and, using (\ref{indhyp2})  we deduce the required result.\newline
\indent For part (ii) we exploit the Lipschitzianity of the function $L$. Moreover, since the policy $\pi$ can be expressed as $\pi=\Gamma(x^i,x^{-i})$ we express the instantaneous function $L$ as $\hat{L}:\mathcal{S}^i\times\mathcal{S}^{-i}$ using \cite{Xu2015}.\newline
\indent We now observe the following estimate:
\begin{gather*}
\mathbb{E}\Big[\Big\|\frac{1}{N}\sum_{j\neq i}^N\hat{L}(\bar{x}_k^i,x^{i}_k)-\int_{\mathcal{S}}\hat{L}(\tilde{x}_k,x')m(dx')\Big\|\Big]
\nonumber\\\leq
c_1\|x^i_k-\bar{x}^i_k\|+c_2\frac{1}{N}\sum_{j\neq i}^N\|x^j_k-\bar{x}_k^j\|+\Bigg[\sum_{l=1}^d{\rm var}\Bigg(\frac{1}{n}\sum_{j\neq i}^N\Theta(x^i_k,x^{-i}_k,\cdot)\Bigg)\Bigg]^{\frac{1}{2}}.
\end{gather*}
where $c_1,c_2>0$ are arbitrary constants. \newline \indent As in part (i), after summing in $i$ and dividing by $N$ and, using the result in part (i) we deduce the result.
\end{proof}

\subsection*{Proof of Theorem 2}
\begin{proof}
The theorem is proved by demonstrating that given some initial belief $\tilde{m}_0$ of the distribution $m$, each agent can generate an iterative sequence $\{\tilde{m}^n\}_{\{n\in\mathbb{N}\}}$ s.th. $\{\tilde{m}^n\}_{\{n\in\mathbb{N}\}}\to m$ as $n\to\infty$. Thereafter, we deploy a \emph{two-timescales method} \cite{borkar,Leslie2006} so that the the beliefs of the distribution (at a given state) are updated slowly and thus are quasi-static from the perspective of the updating procedure for the policy. In particular, in the following, we show that provided the updating procedure to the agent's belief of the distribution $\tilde{m}$ are performed at a sufficiently pace relative to the updates to the policy, the problem is reduced to an optimisation problem for each agent where at each stage of the iteration each agent plays an approximate best-response given some belief of the distribution of the agents (evaluated at the agent's own location). The agents improve their policies using a stochastic gradient procedure which, with an update to the distribution that appears static, produces enough traction to enable convergence to the optimal policy.     

Consider first, each agent's optimal value function which, given a belief of the distribution $\tilde{m}$ is given by the following:
\begin{equation}
v^{\pi,\tilde{m}}(x)\triangleq\sup_{\pi'\in \Pi} J[x,\pi',\tilde{m}_{x}],    \label{payofffunctionmfg belief}
\end{equation}
Suppose also that each agent has a belief $\tilde{m}^n$ of the distribution $\tilde{m}$ over $n=0,1,2,\ldots$, iterations which are updated using a procedure that we will later specify (this can in fact be viewed as a parameter of the value function at iteration $n$), then following (\ref{payofffunctionmfg belief}) we can define the $n^{th}$ iterate of $v$ by the following expression:
\begin{equation}
v^{\pi,\tilde{m}}_n(x)\triangleq\sup_{\pi'\in \Pi} J[x,\pi',\tilde{m}^n_{x}],   \qquad \forall x\in\mathcal{S} \label{payofffunctionmfg kth belief}
\end{equation}
Using the boundedness of $L$ we observe that $\underset{n}{\sup}|v_n^{\pi,\tilde{m}}(x)|=\underset{n,\pi'\in \Pi}{\sup} J[x,\pi',\tilde{m}^n_{x}]\leq \sum_{k=t}^T \|L\|_{\infty}<\infty$, so that the sequence $\{v^{\pi,\tilde{m}}_n(x)\}_{n\geq 1}$ consists of bounded terms.

Let us analogously define the $n^{th}$ \emph{fictitious best-response} as:
\begin{equation}
\tilde{\pi}^n\in\arg\hspace{-0.45 mm}\sup_{\hspace{-2.3 mm}\pi'\in \Pi} v^{\pi',\tilde{m}^n}_n(x),\qquad \forall x\in\mathcal{S}    \label{kth response}
\end{equation}
Suppose that each iteration each agent updates their belief over the distribution via the procedure:
\begin{align}
\hspace{9 mm} \tilde{m}^{n+1}_x=\tilde{m}^{n}_x+ \beta(n)[B(a^n,\tilde{m}^n_x)+M_{n+1}^{(\tilde{m})}],\qquad n=0,1,\ldots, \qquad \forall x\in\mathcal{S},\quad a^n\sim v_n \label{field update}
\end{align}
where $\beta(n)$ is a positive step-size function s.th. $\sum_n\beta(n)=\infty$ and $\sum_n\beta(n)^2<\infty$, $B:\mathcal{A}\times\mathbb{R}\to \mathbb{R}$ is some well-behaved function (recall that $m_x\in\mathbb{R}^d$ is the agent density evaluated at the point $\in\mathcal{S}\subset \mathbb{R}^d$) and $M^{(\tilde{m})}_{n}$ is a martingale difference sequence\footnote{W.r.t. the increasing family of $\sigma-$ algebras $\sigma(x_l,M^{(\tilde{m})}_l,l\leq n)$.}. In particular, we assume that $B$ is Lipschitz continuous in both variables and satisfies a growth condition i.e $\| B(x,y)\|\leq c(1+\| (x,y)\|)$ for any $(x,y)\in \mathcal{A}\times\mathbb{P}(\mathcal{H})$ and $\| B(x,y)-B(w,z)\|\leq d\|(x,y)-(w,z)\|$ for any $(w,z),(x,y)\in \mathcal{A}\times\mathbb{P}(\mathcal{H})$ for some constants $c,d>0$.

We note that since the update can be viewed as an updating procedure over a parameterisation of the value function, the update in (\ref{field update}) can be viewed as a critic update of the value function $v^{\tilde{m}}_n$ (parameterised by $\tilde{m}_1,\tilde{m}_2,\ldots$) within the context of a mean field game. It can be straightforwardly shown that $\sum_n\beta(n)M_{n}^{(\tilde{m})}<\infty$ (see for example, Appendix C in \cite{borkar}).

Since the action $a^n$ is sampled from the policy $\pi^n$ and, using (\ref{kth response}), we can rewrite (\ref{field update}) as the following expression:
\begin{equation}
\qquad\qquad\qquad \tilde{m}^{n+1}_x=\tilde{m}^{n}_x+ \beta(n)[B'(v_n,\tilde{m}^n)+M_{n+1}^{(\tilde{m})}],\qquad n=0,1,\ldots \quad \qquad \forall x\in\mathcal{S}   \label{field update 2}
\end{equation}
for some map $B'$ which maps to and from the same spaces and satisfies the same assumptions as $B$ (we have suppressed the super-indices on the function $v$).

Our goal is to show that the following expression holds $\forall x\in\mathcal{S}$:
\begin{equation}
\lim_{n\to\infty}v^{\tilde{\pi}^n,\tilde{m}^n}_n(x)=v^{\lambda(m_x),m_x}\qquad a.s.
\end{equation}
where $(\lambda(m),m)$ is a fixed point solution to the MFG system (B) (i.e. $\lambda(m)\in\sup_{\pi'\in\Pi}v^{\pi'(m_x),m_x}(x)$).

Let us now consider the following update procedure for $v_n$:
\begin{equation}
v_{n+1}(x)=v_{n}(x)+ \alpha(n)[A(\pi^n,\tilde{m}^n_x)+{M}_{n+1}^{(\pi)}],\qquad n=0,1,\ldots\quad \forall x\in\mathcal{S}    \label{policy update slow}
\end{equation}
where $M^{(\pi)}_{n}$ is a martingale difference sequence and $A:\Pi\times\mathbb{R}\to \mathbb{R}$ is a Lipschitz continuous function  that satisfies a growth condition in both variables and where $\alpha$ is a positive step-size function s.th. $\sum_n\alpha(n)=\infty$ and $\sum_n\alpha(n)^2<\infty$ and is chosen such that  $\alpha(n)\backslash \beta(n) \sim \mathcal{O}(\frac{1}{n^p})$ for some $p>1$ so that the belief $\tilde{m}^n$ is updated slowly relative to $\pi$ and $\alpha(n)\backslash \beta(n)\to 0$ as $n\to \infty$. In a similar way it can be shown that $\sum_n\alpha(n)M_{n}^{(\pi)}<\infty$. 

We can now rewrite the update process over $\tilde{m}$ as the following:
\begin{equation}
\tilde{m}^{n+1}_x=\tilde{m}^{n}_x+ \alpha(n)[\hat{B}'(v_n,\tilde{m}^n_x)+\hat{M}_{n+1}^{(\tilde{m})}],\qquad n=0,1,\ldots    \label{field update slow}
\end{equation}
where $\hat{B}'(v_n,\tilde{m}^n_x)\triangleq\frac{\beta (n)}{\alpha(n)}B'(v_n,\tilde{m}^n_x)$ and $\hat{M}_{n+1}^{(\tilde{m})}\triangleq\frac{\beta (n)}{\alpha(n)}M_{n+1}^{(\tilde{m})}$. We note that the sequence $\{\tilde{m}^{n}_x\}_{n\geq 1}$ is quasi-static w.r.t. the sequence $\{v_n\}_{n\geq 1}$, moreover the sequence $\{v_n\}_{n\geq 1}$ converges when $\tilde{m}^n_x$ is fixed at a particular $n\in\mathbb{N}$ (the result follows since it can be shown that $v_n$ strictly increases whenever $v_n$ is suboptimal).

Let us firstly recall that $\underset{n}{\sup}|v_n^{\pi,\tilde{m}}(x)|< \infty$ and $\sum_n\alpha(n)M^{(\pi)}_{n}<\infty$. Analogously, we have that $\sum_n\alpha(n)\hat{M}_{n}^{(\tilde{m})}=\sum_n\beta(n)M^{(\tilde{m})}_{n}<\infty$ moreover, since $\tilde{m}^n_x$ is defined over a bounded domain and, using Assumption 2 we deduce that $\underset{n}{\sup}|\tilde{m}^n_x|< \infty$. We can therefore apply theorem 2, ch. 6 in \cite{borkar} from which find that $\lim_{n\to\infty}(v_n^{\tilde{\pi}^n,\tilde{m}},\tilde{m}^n)\to (v^{\lambda(m),m},m)$ a.s. after which, using (\ref{kth response}), we deduce the thesis.  


\end{proof}

\bibliographystyle{aaai}
\bibliography{sample}